\documentclass[aps,prb,showpacs,notitlepage,floatfix,twocolumn,superscriptaddress]{revtex4-1}

\usepackage{bbm}
\usepackage{mathrsfs}
\usepackage{subcaption}
\usepackage{epsfig}

\usepackage{soul,xcolor}
\usepackage{graphicx}
\graphicspath{{./figures/}}

\usepackage{amsfonts}
\usepackage{amsthm}
\usepackage[figuresright]{rotating}
\usepackage{amssymb}
\usepackage{amsmath}
\usepackage{dcolumn}
\usepackage{float}
\usepackage{bm}
\usepackage{braket}
\usepackage[normalem]{ulem}
\DeclareMathOperator{\Tr}{Tr}
\usepackage[colorlinks,linkcolor=blue,anchorcolor=blue,citecolor=blue,urlcolor=blue]{hyperref}

\usepackage{physics}

\def\be{\begin{equation}} \def\ee{\end{equation}}
\def\bea{\begin{equation}\begin{aligned}} \def\eea{\end{aligned}\end{equation}}

\newcommand{\symproj}[2][2]{P^{#1,\,#2}_{\text{sym}}}
\newcommand{\symprojT}[2][2]{T^{#1,\,#2}_{\text{sym}}}

\definecolor{KKgreen}{RGB}{0,100,0}

\begin{document}

\title{Deformations of the boundary theory of the square lattice AKLT model}

\author{John Martyn}\affiliation{Department of Physics, University of Maryland, College Park, MD 20742, USA}\affiliation{Institute for Quantum Information and Matter, California Institute of Technology, Pasadena, CA 91125, USA}
\author{Kohtaro Kato}\affiliation{Institute for Quantum Information and Matter, California Institute of Technology, Pasadena, CA 91125, USA}
\author{Angelo Lucia}
\affiliation{Institute for Quantum Information and Matter, California Institute of Technology, Pasadena, CA 91125, USA}
\affiliation{Walter Burke Institute for Theoretical Physics, California Institute of Technology, Pasadena, CA 91125, USA}

\begin{abstract}
The 1D AKLT model is a paradigm of antiferromagnetism, and its ground state exhibits symmetry-protected topological order. On a 2D lattice, the AKLT model has recently gained attention because it too displays symmetry-protected topological order, and its ground state can act as a resource state for measurement-based quantum computation. While the 1D model has been shown to be gapped, it remains an open problem to prove the existence of a spectral gap on the 2D square lattice, which would guarantee the robustness of the resource state. Recently, it has been shown that one can deduce this spectral gap by analyzing the model's boundary theory via a tensor network representation of the ground state. In this work, we express the boundary state of the 2D AKLT model in terms of a classical loop model, where loops, vertices, and crossings are each given a weight. We use numerical techniques to sample configurations of loops and subsequently evaluate the boundary state and boundary Hamiltonian on a square lattice. As a result, we evidence a spectral gap in the square lattice AKLT model. In addition, by varying the weights of the loops, vertices, and crossings, we indicate the presence of three distinct phases exhibited by the classical loop model.
\end{abstract}

\maketitle

\section{Introduction}
The boundary state of a quantum system encodes ample information about the system's bulk. This concept has motivated numerous research developments in different areas, from the theory of topological insulators~\cite{1999IJTP...38.1113M, PhysRevLett.95.226801, PhysRevLett.95.146802} to the AdS/CFT correspondence. 

A correspondence between a system's boundary and its bulk is explicitly manifest in  Projected Entangled Pair States (PEPS)~\cite{2017JPhA...50v3001B, 2014AnPhy.349..117O}. This type of tensor network state has risen to prominence in the past decade due to its versatility in approximating ground states and expressiveness in capturing essential physical properties of exotic phases of quantum matter.

The correspondence we discuss involves the spectral gap of a quantum many-body Hamiltonian, or the difference between its two lowest energy levels. Quantum phase transitions correspond to points in the phase diagram where the spectral gap vanishes. While the problem of determining the presence of a spectral gap or the related phase diagram is in general undecidable~\cite{Cubitt2015,1810.01858,1910.01631}, it is tractable for some specific cases of interest.

In the case of PEPS models, solid numerical evidence~\cite{2011PhRvB..83x5134C} and analytical results~\cite{2017arXiv170907691K, 2004.10516} provide a connection between the spectral gap of the Hamiltonian and locality properties of the boundary state. This bulk-boundary correspondence can be used to evidence a spectral gap in systems that are otherwise difficult to study analytically. 

One noteworthy system is the 2D generalization of the model proposed by Affleck, Kennedy, Lieb and Tasaki (AKLT)~\cite{1987PhRvL..59..799A, 1988CMaPh.115..477A}, whose ground state is known as a valence bond state (VBS), or equivalently, a PEPS. Since the 2D AKLT VBS can be used as a resource for measurement-based quantum computation~\cite{2008PhRvL.101a0502B, 2012PhRvA..86c2328W, 2015arXiv150802595Z}, the spectral gap has implications on the complexity of preparing the states and their stability against noise. However, despite extensive analyses of the 2D AKLT model, its spectral gap remains to be fully understood. Various numerical studies have suggested that the general AKLT model is gapped, some of which have explicitly calculated the spectral gap for small systems and estimated this gap in the thermodynamic limit~\cite{2013PhRvB..88x5118G, 2015PhRvB..92t1111V}. Very recently, the spectral gap for the 2D hexagonal model has been confirmed by numerically verifying a rigorous finite size criteria~\cite{1910.11810}.  

In this work, we numerically study the spectral gap of the 2D AKLT model on a square lattice via the bulk-boundary correspondence in the PEPS framework. For the AKLT model, the boundary states are mapped to a classical loop model, where vertices and loops are each assigned a specific statistical weight. The corresponding classical loop model on the hexagonal lattice has been studied in depth, but not within the context of the AKLT model~\cite{LoopModelNotes}. However, unlike loops on a hexagonal lattice, loops can intersect on a square lattice, and therefore the analysis is more involved in our scenario.

We use random sampling to numerically construct the boundary state and the boundary Hamiltonian on lattices of different sizes. We find evidence of locality of the boundary Hamiltonian and therefore indicate the existence of a spectral gap, in accordance to previous findings in the literature \cite{2011PhRvB..84x5128L, 2010JPhA43y5303K, 2011PhRvB..83x5134C}.
We obtain further evidence that the boundary of the AKLT model does indeed show the locality features of a gapped model by varying the parameters of the classical loop model. By doing so, we obtain a phase diagram that exhibits multiple phases. The point corresponding to the AKLT model sits well inside a non-critical phase, thus confirming that the model is gapped. While many deformations of the AKLT model have been proposed in the literature \cite{2011PhRvB..83x5134C, 1711.00036, 1605.08417}, to the best of our knowledge these deformations do not give rise to boundary states related to these loop models, which appear instead for some recently proposed AKLT models on decorated lattices \cite{2019arXiv190109297A, 1905.01275}.

\section{Background}
\subsection{Projected Entangled Pair States (PEPS)} \label{sec:PEPS}

Adopting the conventions of Ref. \cite{2017arXiv170907691K}, PEPS are defined as follows. First begin with a subset $\Lambda$ of an infinite graph on which to model a state. Associate with each vertex $v \in \Lambda$ a $d$-dimensional Hilbert space $\mathcal{H}_d$, known as the physical Hilbert space. The total Hilbert space of the graph is then $\mathcal{H}_\Lambda = \bigotimes_{v \in \Lambda} \mathcal{H}_d$. Next, denote the set of edges of $\Lambda$ by $E_\Lambda$. Associate with each end point of an edge a Hilbert space $H_D$ of dimension $D$, spanned by an orthonormal basis $\{\ket{j}\}_{j=1}^D$. $D$ is known as the bond dimension of the PEPS. On the Hilbert space of edge $e$, denote the maximally entangled state by $\ket{\omega}_e = \frac{1}{\sqrt{D}} \sum_{j=1}^D \ket{j, j}_e$. Focus now on a vertex $v$ of degree $r$, and associate with this vertex a linear map $T_v: \mathcal{H}_D^{\otimes r} \to \mathcal{H}_d$. $T_v$ maps from states of the virtual Hilbert space, $\mathcal{H}_D^{\otimes r}$, to the physical Hilbert space, $\mathcal{H}_d$. 

With these definitions, a PEPS state on $\Lambda$ (with no outgoing edges) is defined, up to normalization, as 
\begin{equation}\label{eq:PEPS}
\ket{\mathrm{PEPS_\Lambda}} = \bigotimes_{v \in \Lambda} T_v \bigotimes_{e\in E_\Lambda} \ket{\omega}_e.
\end{equation}
We interpret this as a state where the transformation $T_v$ is applied to entangled pairs at the edges of each vertex, producing a state of the physical Hilbert space. If the bond dimension $D$ grows as a polynomial in the system size, then this PEPS description is called efficient.
If $\Lambda$ has outgoing edges, we collect the non-contracted virtual degrees of freedom into a boundary Hilbert space $\mathcal{H}_{\partial \Lambda}$. 

\subsection{The Bulk-Boundary Correspondence}
As explained in Refs. \cite{2011PhRvB..83x5134C, BBPEPS, 2011PhRvB..84x5128L}, PEPS exhibit a bulk-boundary correspondence, wherein the virtual degrees of freedom at the boundary correspond to the physical degrees of freedom in the bulk. Explicitly, the boundary theory is governed by the boundary Hamiltonian, defined as follows. First, focus on a region $\Lambda$ with all the outgoing edges. The density matrix on the boundary of $\Lambda$ is then obtained by tracing out the bulk physical degrees of freedom and keeping only the remaining boundary virtual degrees of freedom:
\begin{equation}
\rho_{\partial \Lambda} = \mathrm{tr}_{\Lambda}\big(\dyad{\mathrm{PEPS}_{\Lambda}}\big) \in \mathcal{B}(\mathcal{H}_{\partial \Lambda}), 
\end{equation}
where $\mathcal{B}(\mathcal{H}_{\partial \Lambda})$ is the set of bounded operators on $\mathcal{H}_{\partial \Lambda}$. The boundary Hamiltonian, $H_{\partial \Lambda}$, is defined by equating $\rho_{\partial \Lambda}$ to a thermal state at inverse temperature $\beta = 1$:
\begin{equation} \label{eq:Ham}
\rho_{\partial \Lambda} = e^{-H_{\partial \Lambda}}.
\end{equation}

Via the bulk-boundary correspondence, many properties of the bulk can be learned from $\rho_{\partial \Lambda}$. For instance, the boundary Hamiltonian exhibits the same symmetries as the bulk Hamiltonian. More importantly, the locality of the boundary Hamiltonian indicates a spectral gap in the bulk Hamiltonian \cite{2011PhRvB..83x5134C}. This correspondence is also useful for numerical analyses. Ref. \cite{2011PhRvB..83x5134C} employs the bulk-boundary correspondence to numerically study the entanglement spectrum of the boundary theory of a modified 2D AKLT model.

\subsection{The AKLT Model and the VBS}
Following Ref. \cite{2010JPhA43y5303K}, let us define the AKLT model on a graph $G = (V, E)$. With each vertex $k\in V$, we associate a spin operator $\vec{S}_k$ whose spin value is $S_k = |\vec{S}_k| = \frac{1}{2}\mathrm{deg}(k)$, where $\mathrm{deg}(k)$ is the degree of vertex $k$. The AKLT model is then defined by the following Hamiltonian:
\begin{equation}
H_{\mathrm{AKLT}} = \sum_{\langle k, l \rangle \in E} A(k,l) \pi_{S_k+S_l} (k,l),
\end{equation}
where $A(k,l) \in \mathbb{R}^+$ are (arbitrary) coefficients, and $\pi_{S_k+S_l} (k,l)$ is the projector onto the space of maximal total spin of vertices $k$ and $l$: $S_{k,l} := S_k+S_l$. Explicit formulae for $\pi_{S_k+S_l} (k,l)$ are provided in Ref. \cite{2010JPhA43y5303K}. 
 
The ground state of this model is known as the valence-bond-state (VBS) \cite{2011PhRvB..84x5128L, 2010JPhA43y5303K}. It is the unique ground state, up to degeneracies arising from the states of the boundary vertices. As shown in Refs. \cite{2019arXiv190109297A, 2008AdPhy..57..143V, 2017JPhA...50v3001B}, the VBS in 2 dimensions has an exact PEPS representation of bond dimension $D=2$. This representation has been used to study the properties of the AKLT model on different lattices, such as cylinders and hexagonal lattices \cite{2011PhRvB..83x5134C}.

On a graph $G = (V,E)$, the VBS is obtained by placing at each edge $(k,k')$ in $E$ a singlet state shared between $k$ and $k'$ (so that the virtual vector space at a vertex $k$ is $(\mathbb{C}^2)^{\otimes \text{deg}(k)}$ ), and then projecting the virtual space at each vertex onto its appropriate symmetric subspace, which has dimension $\deg(k)$. Formally, the VBS is defined as 
\begin{equation} \label{eq:VBS}
\ket{\text{VBS}} = \bigotimes_{k\in V} \symprojT{\deg(k)}  \bigotimes_{e\in E} \ket{s}_e,
\end{equation}
where $\symprojT{\deg(k)}:(\mathbb{C}^2)^{\otimes \text{deg}(k)} \to \mathbb{C}^{\deg(k)}$ is the projector onto the symmetric subspace, and $\ket{s} = \frac{1}{\sqrt{2}}\qty(\ket{01}-\ket{10})$ is the singlet state. Note that we will ignore the normalization factor in the rest of this paper.

\subsection{The VBS as a PEPS}
Eq.~\eqref{eq:VBS} is almost in the same form as Eq. \eqref{eq:PEPS}, except that we are contracting against antisymmetric singlet states $\ket{s}$ instead of $\ket{\omega}$.
In order to connect with the existing results in the PEPS literature, one would like to express the VBS as in Eq.~\eqref{eq:PEPS}. This is easily remedied by representing each $\ket{s}$ in terms of $\ket{\omega}$, by applying a multiple of a Pauli matrix:  $\ket{s} = (I \otimes i\sigma^y) \ket{\omega}$. This shows that the VBS is indeed a PEPS.

For bipartite lattices, such as the square or hexagonal lattice, one can then choose to distribute the $i\sigma^y$ matrices performing the rotation from $\ket{\omega}$ to $\ket{s}$ in such a way that either all of the virtual indices at a given site are affected, or none are. With this choice, the tensor $T_v$ representing the VBS state as a PEPS depends on which bipartition the vertex $v$ belongs to: in one case it will be exactly $\symprojT{\deg(v)}$, while in the other it will be given by $\symprojT{\deg(v)} (i\sigma^y)^{\otimes \deg(v)}$. The latter is equal to $U_v \symprojT{\deg(v)}$ for some local unitary $U_v$. When computing the boundary state $\rho_{\partial \Lambda}$,  these local unitaries $U_v$ cancel out under the partial trace.

On the contrary, at the virtual boundary sites, on which we are not taking the trace when computing the boundary state, the unitary matrices $i\sigma^y$ will not cancel out, so that the PEPS boundary state will be equal to
\[ 
     U_{\partial \Lambda} 
     \tr_{\text{bulk}} \qty(\bigotimes_{v \in V} \symprojT{\deg(v)} \bigotimes_{e\in E}\dyad{s}_e \bigotimes_{v \in V} \symprojT{\deg(v) \dag}) U_{\partial \Lambda}^\dag,
\]
where $U_{\partial \Lambda}$ is a tensor product of Pauli $\sigma_y$ acting on the virtual boundary Hilbert space. Since this transformation does not affect the locality properties of the boundary state, we will ignore it and compute $\rho_{\partial \Lambda}$ without it, with the understanding that what we obtain is the correct PEPS boundary state for the VBS up to a local change of basis.

Finally, in order to simplify the notation, we will denote by $\symproj{\deg(k)}$ the composition $\symprojT{\deg(k) \dag} \symprojT{\deg(k)}$, thinking of it as an operator on $(\mathbb{C}^2)^{\otimes \deg(k)}$. With this final simplification, the boundary state we are going to compute is 
\begin{equation}
    \rho_{\partial \Lambda} = 
    \tr_{\text{bulk}} \qty(\bigotimes_{v \in V} \symproj{\deg(v)} \bigotimes_{e\in E}\dyad{s}_e).
\end{equation}

\section{Mapping the Boundary VBS to a Classical Loop Model} \label{sec:map}
In this section, we establish the connection between the boundary state of the AKLT model and a classical loop model. In order to make the description more accessible, we apply our method to the VBS on lattices of increasing complexity: a 1D chain, a 2D hexagonal lattice, and finally a 2D square lattice. The calculations regarding the projection operators $\symproj{\deg(k)}$ are presented in Appendix \ref{sec:projops}.

\subsection{Boundary State of VBS on a 1D Chain}
We begin by looking at the boundary state of the VBS on 
a 1D chain. We can find the exact boundary state in this case, and it will serve as a precursor to studying the VBS on 2D graphs.

On the 1D chain, $\text{deg}(k) = 2 \ \ \forall k \in V$. Consider an interval $[1,n]$, whose bulk is composed by $n$ spin-1 degrees of freedom, while the boundary consists of the two spin 1/2 degrees of freedom at the outgoing edges.
We trace out the bulk and we denote by $\rho_n$ the resulting boundary state, that is,  
\begin{equation}\label{eq:rhon1d}
\rho_n=\mathrm{tr}_{\rm bulk}\left( \bigotimes_{i=1}^n \symproj{2}\dyad{s}^{\otimes n+1}\right).
\end{equation}

We can write $\symproj{2}$ as (see Appendix \ref{d2n2} for details): 
\begin{equation}
\symproj{2} = \frac{3}{4} \Big(I + \frac{1}{3} \sigma_1 \cdot \sigma_2 \Big),
\end{equation}
where $\sigma_1 \cdot \sigma_2 := \sigma^x \otimes \sigma^x + \sigma^y \otimes \sigma^y + \sigma^z \otimes \sigma^z$, and the sub-indices on $\sigma$ denote the virtual sites on which the operator acts. 
One can then expand the tensor product $\bigotimes_{i=1}^n \symproj{2}$ as a polynomial in the $\sigma \cdot \sigma$ terms:
\begin{equation}\label{eq:1d-expansion}
    \bigotimes_{i=1}^n \symproj{2} = \qty(\frac{3}{4})^n \sum_{J\subset[1,n]} \frac{1}{3^{\abs{J}}} \bigotimes_{i\in J} \sigma_{i,1}\cdot \sigma_{i,2},
\end{equation}
where the sub-index denotes which of the two virtual sites of each site the operator acts on.
However, it turns out that when contracted against the $\ket{s}$ states in Eq.~\eqref{eq:rhon1d}, all except two of these terms vanish. 

To see this relation and straightforwardly calculate the boundary state, it is convenient to introduce a diagrammatic representation of the expansion in Eq.~\eqref{eq:1d-expansion}. For each term in the expansion in Eq.~\eqref{eq:1d-expansion}, we say a site is ``turned on" if $\sigma\cdot\sigma$ acts on it, and ``turned off" if the identity $I$ acts on it. We then associate a configuration of \emph{strings} to each term by drawing lines on the edges adjacent to every turned on site. A string ends when it reaches a turned off site or one of the boundary sites.

The expansion in Eq.~\eqref{eq:1d-expansion} is then an expansion over all the possible configuration of such strings (where each string corresponds to a connected component of the set $J$). It is easy to show that if a string begins or terminates in the bulk then it necessarily vanishes, since $\expval{I \otimes \sigma^i}{s} = 0$ for $i=x,y,z$.

Therefore, there are only two configurations that contribute to $\rho_n$: the configuration with all sites turned off, which has no strings, and the configuration with all sites turned on, which has a string connecting the two boundary qubits. Neither of these configurations contain strings that begin or terminate in the bulk. 

The configuration with no strings yields a contribution equal to the identity. By using the fact that $\expval{\sigma^i \otimes \sigma^j}{s} = (-1)\delta_{ij}$, one can see that the configuration with all sites turned on yields a contribution of $(-1)^{n+1} 3^{-n} \sigma_0 \cdot \sigma_{n+1}$ (where $0$ and $n+1$ denote the two boundary spins). Therefore, after normalizing the state, the boundary state of the 1D VBS in 1D is
\begin{equation}
\rho_n = \frac{I}{4} + \frac{(-1)^{n+1}}{3^n} \frac{\sigma_0 \cdot \sigma_{n+1}}{4}. 
\end{equation}

\subsection{Boundary State of VBS on a Hexagonal Lattice}\label{VBS_hex}
We now turn to the VBS on a 2D hexagonal lattice, which we can study analogously to the 1D case. On this lattice, $\text{deg}(k) = 3$, so we will use (see Appendix \ref{d2n3} for details)
\begin{equation}
\symproj{3} = \frac{1}{2} \Big( I + \frac{1}{3}(\sigma_1 \cdot \sigma_2 + \sigma_2 \cdot \sigma_3 + \sigma_1 \cdot \sigma_3)\Big),
\end{equation}
where again the subindex denotes the virtual leg that the operator acts on. In this case, there are three types of turned on sites corresponding to different combinations of outgoing lines determined by the subindices of $\sigma$'s. We construct a diagrammatic representation by drawing lines only on the outgoing edges of each site where the $\sigma$'s are acting.
By using arguments similar to those in the 1D case, we find that the only contributions to $\rho_\partial$ come from configurations where no strings begin or terminate in the bulk. Thus, $\rho_\partial$ is a sum over configurations where strings either form closed loops in the bulk, or terminate at the boundary of the hexagonal lattice. Since each vertex has degree 3 and the number of $\sigma$'s in the expansion at each site has to be even, no crossings of the loops or strings are allowed. 

As the only terms contributing to the boundary state are configurations of loop and strings with endpoints on the boundary, we can collect all the terms with the same endpoints, so that $\rho_\partial$ will be a weighted sum of products of terms $\sigma_i \cdot \sigma_j$, where $i$ and $j$ are virtual sites at the boundary. Let $\mathbf{i} = \{ (i_1, j_1), (i_2, j_2), ..., (i_k, j_k) \}$ be a set of disjoint pairs of boundary qubits and ${\mathcal I}=\{\mathbf{i}\}$ be the set of all possible such disjoint pairs.  The boundary state is then given by
\begin{equation}\label{eq:rho_partial}
\begin{gathered}
\rho_\partial = \frac{1}{\mathcal{N}}\sum_{\textbf{i}\in\mathcal{I}} Z^{\textbf{i}} \prod_{k=1}^n \sigma_{i_k} \cdot \sigma_{j_k},
\end{gathered}
\end{equation}
where $\{Z^{\mathbf{i}}\}_{\mathbf{i}}$ are coefficients and $\mathcal{N}$ is a normalization factor. Each coefficient is obtained by summing the contributions from all possible loops and string configurations with endpoints $\mathbf{i}$, which we denote by $\mathcal{C}_{\mathbf{i}}$:
\begin{equation}
Z^{\textbf{i}} = \sum_{\Delta \in \mathcal{C}_{\mathbf{i}}} w(\Delta).
\end{equation}
Here, $w(\cdot)$ is the weight function determined by the contraction of the tensors, which can be computed with the following rules, starting from the empty diagram with no strings which is given an arbitrary weight (the choice of which will only affect the normalization of the boundary state, so we set it equal to 1 without loss of generality):
\begin{enumerate}
\item Each edge contributes a multiplicative factor of $(-1)$;
\item Each loop contributes a multiplicative factor of $3$;
\item Each vertex with a string passing through it contributes a multiplicative factor of $1/3$.
\end{enumerate}
For a given configuration $\Delta$, if we denote by $m$ the total number of edges covered by strings, $\ell$ the number of loops, and $v_2$ the number of bulk vertices with a string passing through them (we choose the index 2 because this is the degree of the vertex), then we obtain the following weight, which we can interpret as a Boltzmann factor up to a sign:
\begin{equation}
w(\Delta) =(-1)^{m} 3^{-v_2 + \ell} = (-1)^{m} \exp[-\ln(3) (v_2  - \ell)].
\end{equation}
On a bipartite graph like the hexagonal lattice, the weights of configurations with the same endpoints all have the same sign, so the sign of these weights only affects the overall sign of $Z^{\textbf{i}}$ without causing any cancellation. Therefore, up to the sign, each coefficient $Z^{\textbf{i}}$ can be interpreted as a partition function of a classical loop model at inverse temperature  $\beta =1 $ with boundary conditions fixed by $\textbf{i}$, and the energy of a configuration given by $\ln(3) (v_2 - \ell)$. This has been previously analyzed as a classical loop model on a hexagonal lattice, albeit not in connection to the AKLT model \cite{LoopModelNotes}.

It is also now straightforward to calculate the normalization factor $\mathcal{N}$ in Eq. (\ref{eq:rho_partial}). Because $\mathrm{tr}(\sigma) = 0$,  $\mathcal{N}$ is determined entirely by the coefficient of the identity in $\rho_\partial$, which corresponds to configurations with no strings connecting boundary qubits. Denoting this coefficient by $Z^{\emptyset}$ (we use the empty set to signify the identity), and letting $N_\partial$ be the number of boundary qubits, we see that $\mathcal{N} = 2^{N_\partial}Z^{\emptyset}$.

\subsection{Boundary State of VBS on a 2D Square Lattice}
Our goal is to study the VBS on a 2D square graph. We have now built up enough formalism to approach this problem.

On a 2D square graph, $\text{deg}(k) = 4$, so the relevant projector is now (see Appendix \ref{d2n4} for details)
\begin{equation}
\begin{split}
\symproj{4} = \frac{15}{48} \Bigg(I \ + \  \frac{1}{3}\sum_{(i,j)\in \mathcal{S}_4}(\sigma_i \cdot \sigma_j) \ \\ +\ \frac{1}{15} \sum_{(i,j)(k,l)\in \mathcal{S}_4} (\sigma_i \cdot \sigma_j) (\sigma_k \cdot \sigma_l) \Bigg).
\end{split}
\label{eq:projector-square}
\end{equation}
Here, $\mathcal{S}_4$ is the set of permutations of 4 elements, $\sum_{(ij) \in \mathcal{S}_4}$ denotes the sum over all $2$-cycles in $\mathcal{S}_4$ (there are 6 of these), and $\sum_{(ij)(kl) \in \mathcal{S}_4}$ denotes the sum over all disjoint $(2,2)$-cycles in $\mathcal{S}_4$ (there are 3 of these).

Evidently, this projector contains both 2-body and 4-body interactions. Similar to the cases on the 1D chain and the hexagonal lattice, we can again find $\rho_\partial$ and expand it in terms of $\sigma_{i_k}\cdot \sigma_{j_k}$ terms as in Eq. (\ref{eq:rho_partial}). The coefficients of these terms are again sums over allowed configurations of loops and strings. However, due to the presence of the 4-body term in $\symproj{4}$, the loops and strings are now allowed to cross. In addition, there is now an interaction term between loops and strings passing through the same vertex, irrespectively of whether they cross or not: they are weighted by $1/15$ instead of the weight of $(1/3)^2 = 1/9$ that they would receive as the product of independent strings.

Analogous to the rules on the hexagonal lattice, the rules for calculating the weight $w(\Delta)$ of a configuration on the square lattice are given by:
\begin{enumerate}
\item Each edge contributes a multiplicative factor of $(-1)$;
\item Each loop contributes a multiplicative factor of $3$;
\item Each vertex with a single string passing through it contributes a multiplicative factor of $1/3$;
\item Each vertex with two single string passing through it contributes a multiplicative factor of $1/15$.
\end{enumerate}
If we denote by $m$ the total number of edges covered by strings, $\ell$ the number of closed loops, $v_2$ the number of bulk vertices with only one string passing through, and $v_4$ the number of vertices with two strings passing through (again we choose indices 2 and 4 because these are the degrees of the respective vertices), then the weight of an arbitrary configuration is given by the Boltzmann weight:
\begin{equation}
\begin{gathered}
w(\Delta) =  (-1)^m 3^{-v_2 + \ell} (15)^{-v_4}  =\\ (-1)^m \exp[-\ln(3)( v_2 - \ell) -\ln(15) v_4].
\end{gathered}
\end{equation}
As before, because of the bipartite nature of the lattice, the sign $(-1)^m$ only depend on the boundary sites $\mathbf{i}$, so that we can interpret the coefficients in $\rho_\partial$ as partition functions of a classical loop model. As an example, in Fig. \ref{fig:square_config} we display a sample configuration and calculate its weight.

\begin{figure}[ht]
\center{\includegraphics[width=.49\textwidth]{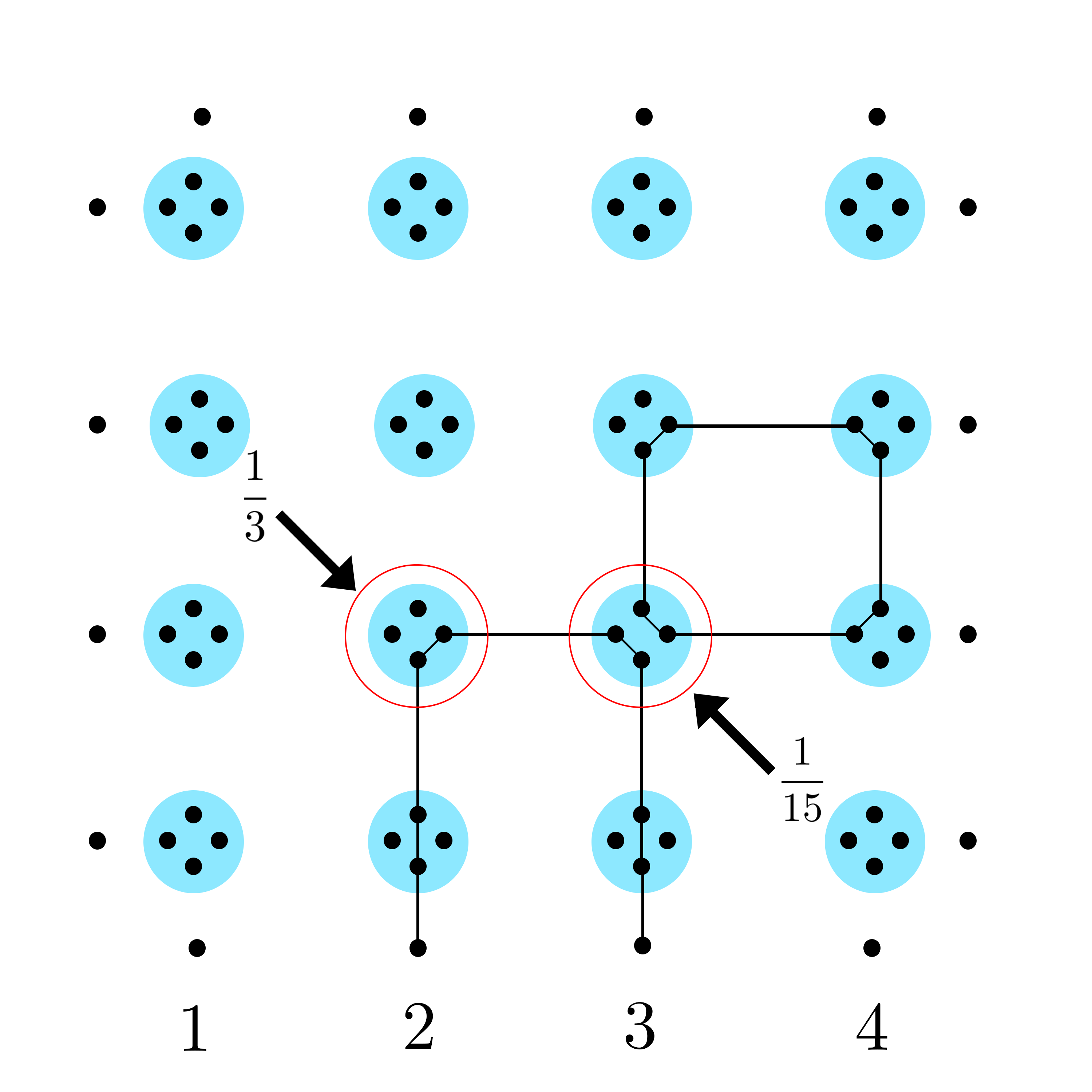}}
\caption{An example configuration on a square lattice. The circled vertex with degree 2 (one string passing through it) contributes a multiplicative factor of $\frac{1}{3}$, whereas the circled vertex with degree 4 (two strings passing through it) contributes a factor of $\frac{1}{15}$. We can compute its weight straightforwardly: we have $m = 9$, $\ell = 1$, $v_2 = 6$, and $v_4 = 1$. Hence the weight of this configuration is $- (3^5 \cdot 15)^{-1}$. Because the strings of this configuration connect boundary spins $2$ and $3$, this weight contributes to $Z^{\{(2,3)\}}$.}
\label{fig:square_config}
\end{figure}

\subsection{Locality at the Boundary} \label{sec:local}
Our goal is to determine the locality of the boundary Hamiltonian, $H_\partial$, on the 2D cylindrical square lattice, which will evidence the presence of a spectral gap. As we describe in Appendix \ref{sec:num}, we study cylindrical lattices with $N_x$ plaquettes in the periodic direction and $N_y$ plaquettes in the longitudinal direction. We use the numerical procedure outline in Appendix \ref{sec:num} to estimate $\rho_\partial$ and subsequently compute the boundary Hamiltonian as $H_\partial = - \log (\rho_\partial)$. 

\subsubsection{Probing Locality}

To probe the locality of the boundary Hamiltonian, we look at the amplitude of the Heisenberg interactions:
\begin{equation}
A_r = \frac{1}{3 \cdot 2^{N_x}} \Tr \big(H_\partial \  \sigma_i \cdot \sigma_{i+r}\big)
\end{equation}
(this is the same for all $i$ due to the rotational symmetry of the cylinder). This amplitude measures the strength of interactions between two qubits separated by $r$ lattice spacings. It is the coefficient of the Heisenberg term in the boundary Hamiltonian:
\begin{equation}
H_\partial = \sum_r A_r \sum_i \sigma_i \cdot \sigma_{i+r} \ + \  ...
\end{equation} 
In addition, we will study the strength of interactions on subsets of n qubits:
\begin{equation}
d_n = \Tr(h_n^2),
\end{equation}
where $h_n$ contains the terms in $H_\partial$ with interaction range $n$. As defined in Ref. \cite{2011PhRvB..83x5134C}, these are the terms in which the largest contiguous block of identity operators acts on $N_x - n$ qubits. For instance, $h_0$ consists of the term in $H_\partial$ that is proportional to the identity, $h_2$ consists of all Heisenberg interactions on neighboring qubits, and so on. Because $H_\partial$ is comprised of products of Heisenberg interactions, which can be deduced from Eq. (\ref{eq:rho_partial}), $d_1$ = 0. All other $d_n$ are nonzero.

Similar quantities are analyzed in Ref. \cite{2011PhRvB..83x5134C} in order to study the locality of a boundary Hamiltonian. For a system of local interactions, we expect both $A_r$ and $d_n$ to decay exponentially in their respective argument or vanish past a certain argument value. We will use these features to diagnose locality.

\subsubsection{Results}

In Fig. \ref{fig:amp_dist_10by10}, we plot $A_r$ and $d_n$ on a lattice of size $N_x=N_y=10$. We see that $A_r$ decays exponentially in $r$, and $d_n$ decays exponentially in $n$. Similar results were seen for lattices of other sizes. Clearly, the interactions in the Hamiltonian decay exponentially in their strength and size, and thus the boundary Hamiltonian is quasi-local. 

\begin{figure}[]
\center{\includegraphics[width=.49\textwidth]{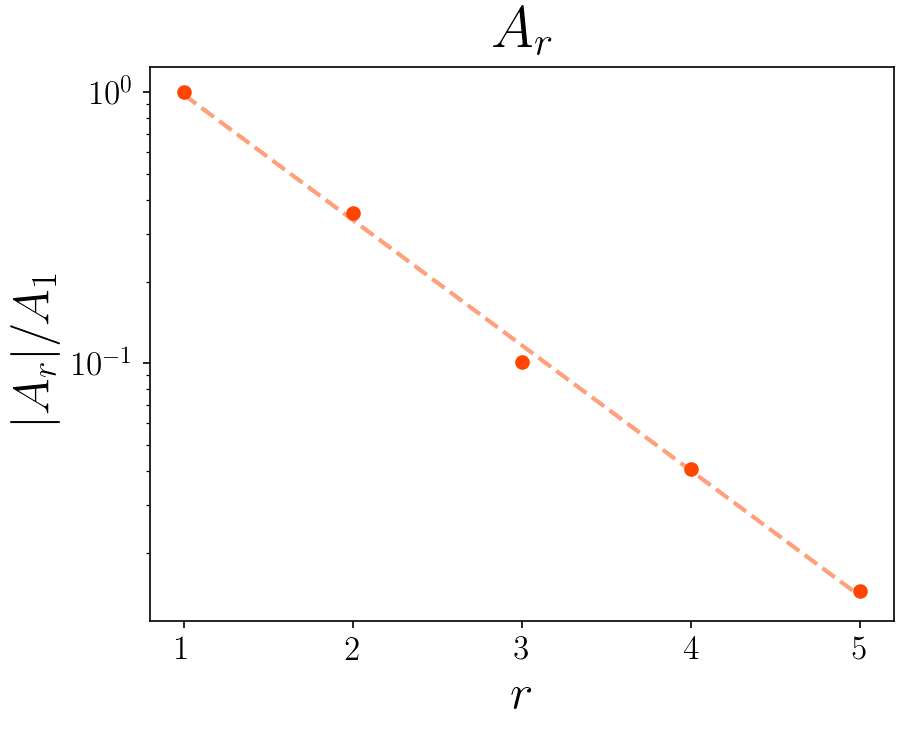}
\includegraphics[width=.49\textwidth]{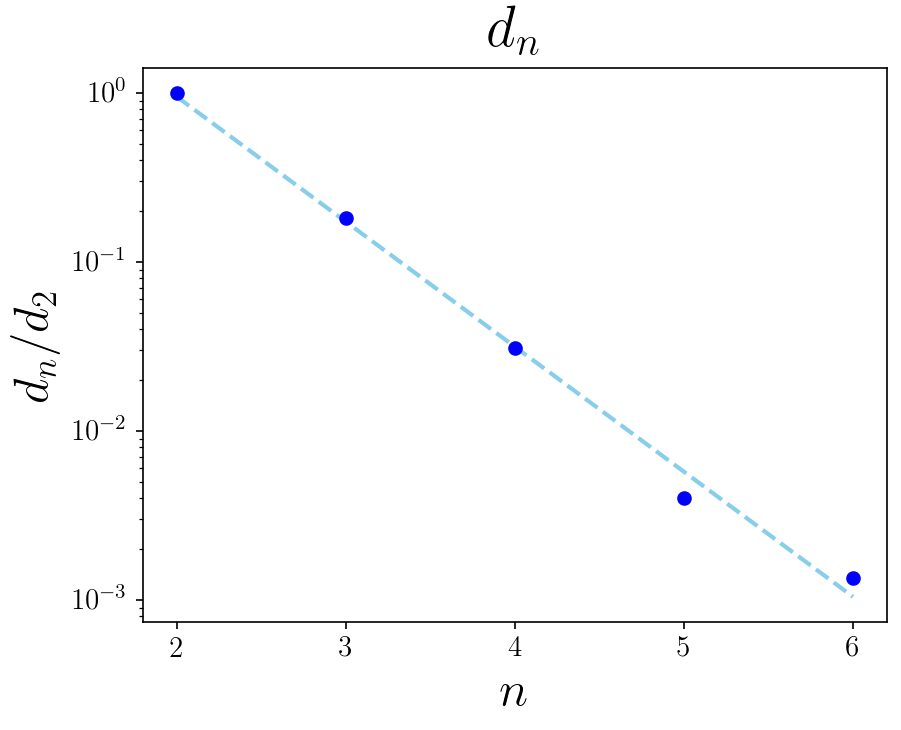}}
\caption{Plots of the Heisenberg interaction amplitudes ($A_r$), the n-qubit interaction strengths ($d_n$), and the corresponding lines of best fit. This data was collected on a lattice of size $N_x=N_y=10$. Also note that these plots only extend to $r=5$ and $n=6$ because the rotational symmetry of the cylinder inhibits exponential decay at larger argument values. For instance, $A_1 = A_9$ because of this symmetry.}
\label{fig:amp_dist_10by10}
\end{figure}

To justify this quasi-locality in the thermodynamic limit, we performed a finite size scaling analysis to extrapolate $A_r$ in the limit of an infinitely long cylinder ($N_y \rightarrow \infty$). The results are shown below in Fig. \ref{fig:FSS}. Note that we collect data on systems up to size $(N_x,\ N_y) = (8,\ 20)$.

\begin{figure}[]
\center{\includegraphics[width=.49\textwidth]{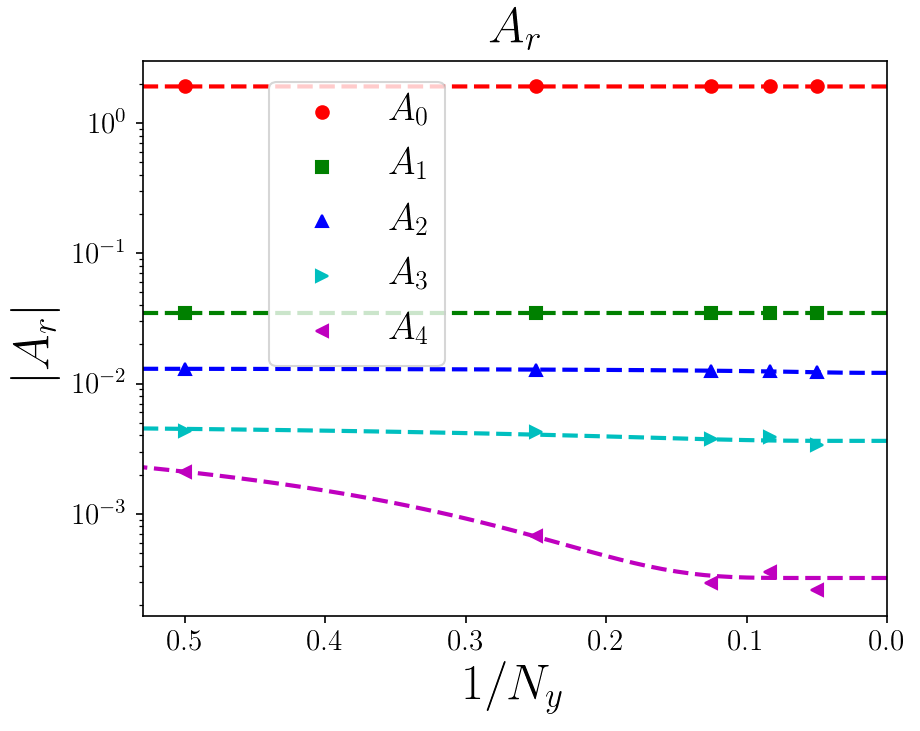}}
\caption{Finite size scaling analysis of Heisenberg interaction amplitudes ($A_r$) as a function of $1/N_y$ on a cylindrical lattice with $N_x=8$. Data is displayed for $N_y = 2,\ 4,\ 8,\ 12,\ 20$. We note that, although $A_0$, $A_1$, $A_2$, and $A_3$ appear constant, they actually grow as a function of $1/N_y$, but less noticeably than $A_4$.}
\label{fig:FSS}
\end{figure}

The finite size scaling analysis indicates that $A_r$ (for $r\geq 1$) continues to decay exponentially in the limit that $N_y \rightarrow \infty$. Hence, the boundary Hamiltonian is quasi-local in the thermodynamic limit. In Ref. \cite{2017arXiv170907691K} it was shown that a strictly local boundary Hamiltonian implies a spectral gap in the bulk, and it was furthermore conjecture that the same results holds for quasi-local interactions, which has been partially confirmed by the recent results of Ref. \cite{2004.10516}. The locality estimate we have obtained thus provides a strong evidence that the 2D AKLT model on a square lattice is gapped.

\section{Phase Structure of the Loop Model}
\subsection{Procedure}
Here we investigate the phase structure of the loop model by varying the factors associated with closed loops, vertices with a single string passing through, and vertices with two strings. These factors were 3, $\frac{1}{3}$, and $\frac{1}{15}$, respectively, in our previous analysis.
By varying these parameters we can induce qualitative changes in the boundary state and reveal a phase diagram for the classical loop model.

To study the phase structure, we introduce a parameterized operator
\begin{align*}
P(c_2, c_4) &= I  +\, c_2 \sum_{(i,j)\in \mathcal{S}_4}(\sigma_i \cdot \sigma_j)  \\ 
&+\, c_4 \sum_{(i,j)(k,l)\in \mathcal{S}_4} (\sigma_i \cdot \sigma_j) (\sigma_k \cdot \sigma_l),
\end{align*}
which when contracted against the network of singlet states $\ket{s}$ produces a boundary state described by a loop model with closed loops weighted by $3$, vertices with a single string passing through by $c_2$, and vertices with two strings by $c_4$. By replacing the operator $\sigma \cdot \sigma$ by a sum over only one or two of the Pauli matrices, we can change the loop weight to $1$ or $2$, respectively. Higher integer loops weight can be obtained similarly by allowing a larger bond dimension and choosing a larger set of orthonormal traceless Hermitian unitary matrices. We denote this loop weight by $c_\ell$. 

In general, $P(c_2,c_4)$ is not a projector. Except for a few special points of the parameters (such as the one corresponding to the VBS), the rank of $P(c_2,c_4)$ will be maximal, which implies a growth in the local physical dimension. 

Furthermore, $P(c_2,c_4)$ is not necessarily positive semi-definite either. However, the boundary state computed with $P(c_2, c_4)$ can be obtained from a PEPS contraction only at points where $P(c_2, c_4)$ is positive semi-definite. In this regime, one can interpret a change in $c_2$ and/or $c_4$ as a perturbation of the PEPS tensor of the VBS after renormalizing at the injectivity length. Note that this perturbation does not break the $SU(2)$ symmetry of the tensor. Outside of the range where $P(c_2, c_4)$ is positive semi-definite, the resulting boundary state is not guaranteed to be a valid density matrix. For instance, it may have negative eigenvalues and therefore not be a physical quantum state. However, it is still meaningful to study the classical loop model in this region, as it exhibits an interesting phase structure. 

To probe the behavior of the boundary state when the weights are changed, we study the spin-spin correlation functions $T_k = \langle {\sigma}_{i} \cdot {\sigma}_{i+k} \rangle - \langle {\sigma}_{i} \rangle \cdot \langle {\sigma}_{i+k} \rangle$, where the average is computed from the boundary state. (This definition is independent of $i$ by the rotational symmetry of the cylinder.) At first glance, this quantity may seem ill-defined when the boundary state is not physical, but $T_k$ really only depends on the classical loop model. To see this, note that Eq. (\ref{eq:rho_partial}) implies $\langle \sigma_i \rangle = 0$ and that $\langle {\sigma}_{i} \cdot {\sigma}_{i+k} \rangle = Z^{\{(i, i+k) \}}/Z^{\emptyset}$, where we have used the value of $\mathcal{N}$ mentioned in Sec. \ref{VBS_hex}. Therefore, $T_k = Z^{\{(i, i+k) \}}/Z^{\emptyset}$ is simply the ratio of two classical partition functions, which is well-defined even when the boundary state is not physical.

We also record the average number of loops and the average perimeter of loops, where these averages are weighted means computed with the Boltzmann factors of loop configurations. These loop properties, as well as the spin-spin correlation functions, are calculated using the random sampling procedure detailed in Appendix \ref{sec:num} to sample configurations on the lattice. 

\subsection{Phases}
\subsubsection{Integer $c_\ell$}
We first study $T_k$ at integer $c_\ell$ and various choices of $c_2$ and $c_4$. We find two characteristic phases exhibited by our model: one in which $|T_k|$ decays exponentially, and one in which it decays sub-exponentially, roughly as a power law. In addition, as indicated by our numerics, the phases can be indexed by a simple, but non-local, order parameter given by the ratio 
\begin{equation}
\eta := \frac{\text{Average Number of Loops}}{\text{Average Total Loop Perimeters}}.
\end{equation}
The average number of loops is computed as the weighted average over all configurations of the number of loops in a configuration. The average total loop perimeter is an analogous average of the sum of perimeters of the loops in a configuration. Qualitatively, this ratio measures the size and prevalence of loops. Because both the numerator and denominator are extensive quantities, this ratio can be finite and nonzero in the thermodynamic limit. 


We find that when the ground state is in the phase of exponentially decaying correlations, $\eta \approx 1/4$. On the other hand, when the ground state is in the phase of power law decaying correlations, $0 < \eta < 1/4$; a typical such value would be $\eta \sim 0.1$. We also discover that $\eta \approx 0$ corresponds to a nonphysical regime, as we discuss in Sec.~\ref{para:growing}.

In Fig. \ref{fig:NewPhasePlot}, we display plots of $\log(\frac{\eta}{1/4})$ for various $c_2$ and $c_4$ at $c_\ell = 3,\ 2,\ 1$. We also enclose in dashed red lines the regime in which the boundary state is physical. We compute this regime by calculating the eigenvalues of $P(c_2, c_4)$ to determine where this operator is positive semi-definite.

\begin{figure}[H]
\center{\includegraphics[width=.4\textwidth]{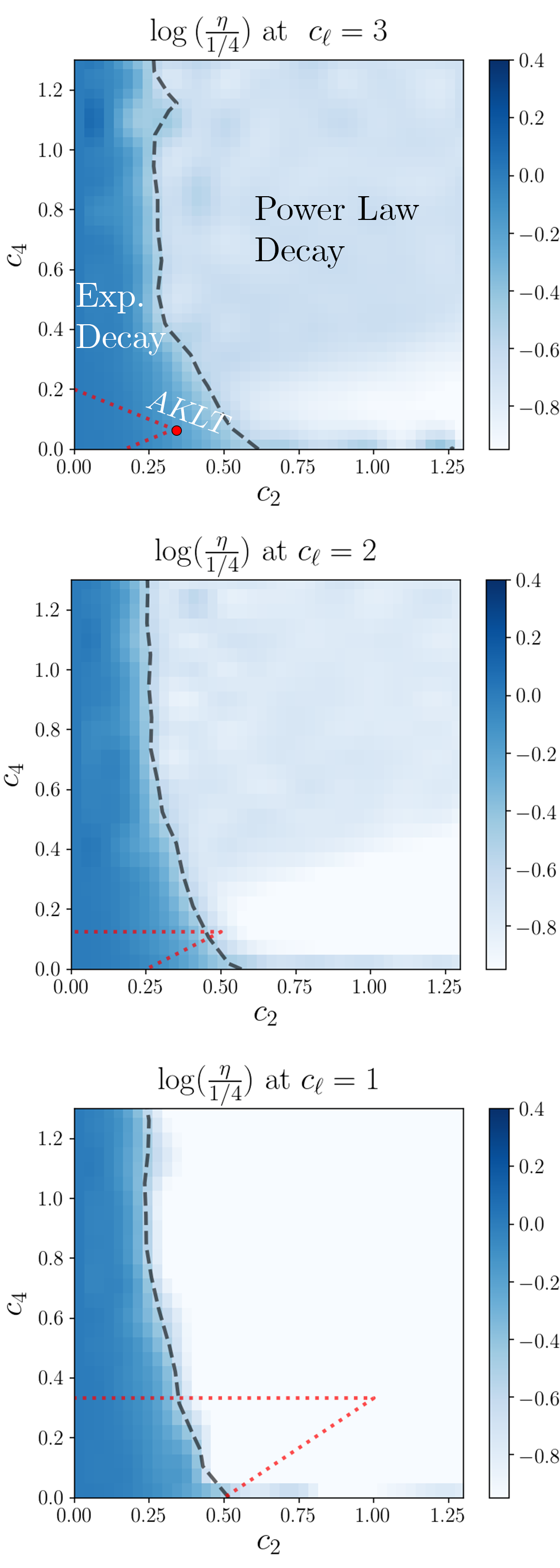}}
\caption{$\log(\frac{\eta}{1/4})$ for various $c_2$ and $c_4$, with integer $c_\ell$. We also plot the contour $\eta = 0.15$ as a dashed gray line to help visualize the phases (other values of $\eta \sim 0.15$ produce similar lines). Likewise, the dashed red line encloses the region where the boundary state is physical, in which $P(c_2, c_4) \geq 0$. \textbf{(Top)} $c_\ell = 3$. This plot contains the AKLT model, which we indicate with a red circle. \textbf{(Middle)} $c_\ell = 2$. \textbf{(Bottom)} $c_\ell = 1$.}
\label{fig:NewPhasePlot}
\end{figure}

The region in Fig. \ref{fig:NewPhasePlot} where $\log(\frac{\eta}{1/4}) \approx 0$ (the blue region) encompasses the phase of exponentially decaying correlations, and the region where $\log(\frac{\eta}{1/4}) < 0$ (the white region) encompasses the phase of power law decaying correlations. From these plots, we see that the AKLT model is well within the phase of exponentially decaying correlations. Increasing $c_2$ and/or $c_4$ will cause the model to enter the phase of power law decaying correlations.

Fig. \ref{fig:NewPhasePlot} also indicates that increasing $c_4$ sharpens the transition between the two phases and shifts the transition to lower values of $c_2$. To study this transition more closely, we calculated $\eta$ as a function of $c_2$ near the transition. We display our results for different lattice sizes in Fig. \ref{fig:eta_transitions}. In the top plot, we fixed $c_\ell = 3$ and $c_4 = 0.2$; in the bottom plot, we fixed $c_\ell = 3$ and $c_4 = 1.2$. It is clear that $\eta$ transitions from $\approx 1/4$ to a value around $\sim 0.1$ more sharply when $c_4 = 1.2$ than when $c_4 = 0.2$. In addition, these plots indicate that, in the limit of an infinite length cylinder, $\eta$ converges to a value near $0.1$ in the power law phase.

\begin{figure}[H]
\center{\includegraphics[width=.45\textwidth]{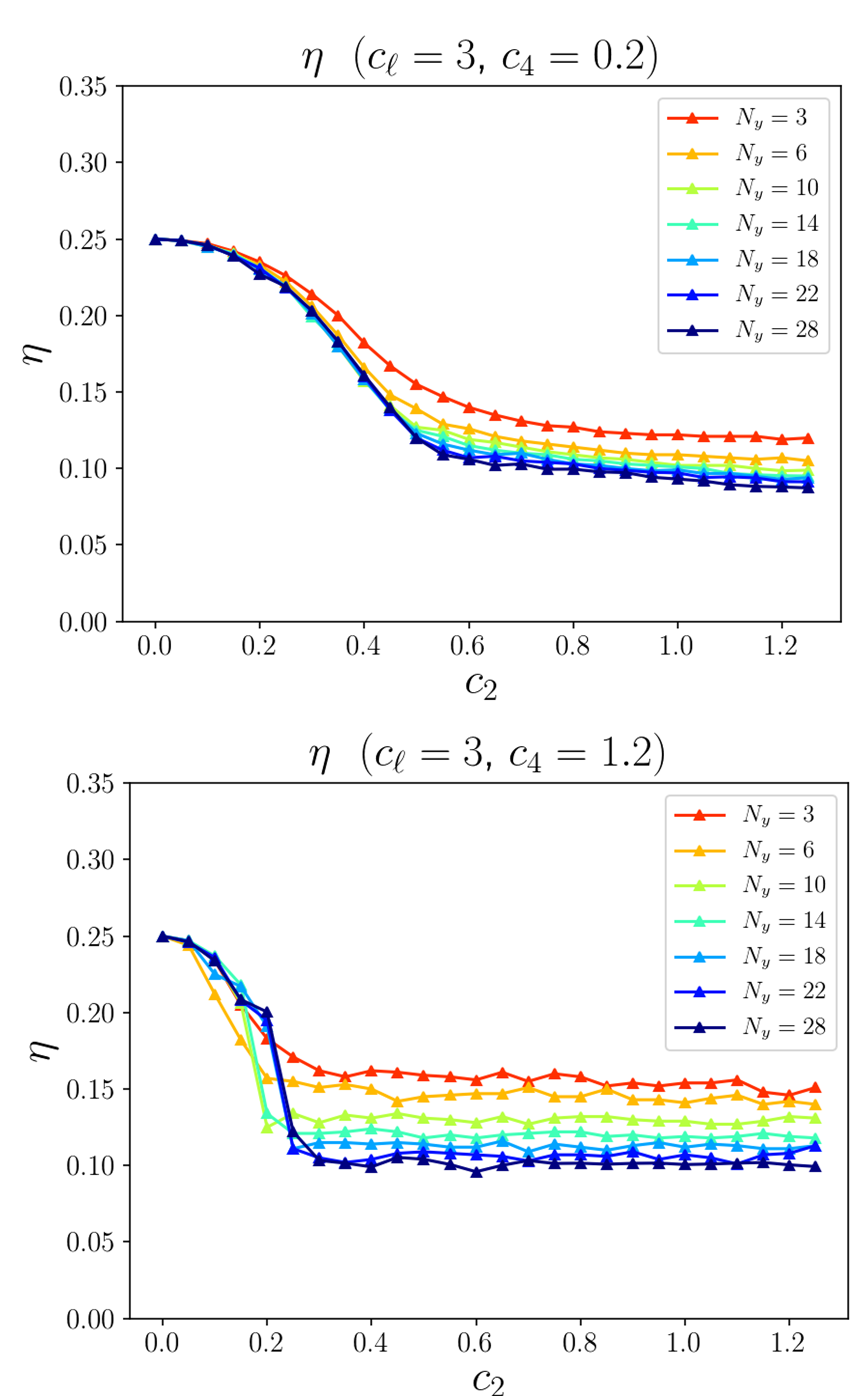}}
\caption{$\eta$ as a function of $c_2$ near the transition. We have fixed $N_x = 10$ and displayed data for various $N_y$. \textbf{(Top)} $c_4 = 0.2$ and $c_\ell = 3$. \textbf{(Bottom)} $c_4 = 1.2$ and $c_\ell = 3$.}
\label{fig:eta_transitions}
\end{figure}

\paragraph{Phase of Exponentially Decaying Correlations} 
The phase of exponentially decaying correlations encompasses regions of parameter space in which $c_2$ and $c_4$ are sufficiently small.
As its name suggests, this phase is characterized by exponentially decaying correlation functions, as shown in Fig. \ref{fig:T_k_Exp}. In addition, $\eta \approx 1/4$ in this phase. This means that the dominant configurations are those that contain separated loops that each cover one plaquette and have a perimeter of 4. 

The phase of exponentially decaying correlations also encompasses physical states. $P(c_2, c_4)$ is positive semi-definite in some, but not all, of this phase. For instance, the 2D AKLT model falls under this phase, which we depict in Fig. \ref{fig:eta_transitions}. As we demonstrated that this model has a quasi-local boundary Hamiltonian and thus a gapped bulk Hamiltonian, we expect that other physical states in this phase also have boundary and bulk Hamiltonians with the same properties.

\begin{figure}[H]
\center{\includegraphics[width=.49\textwidth]{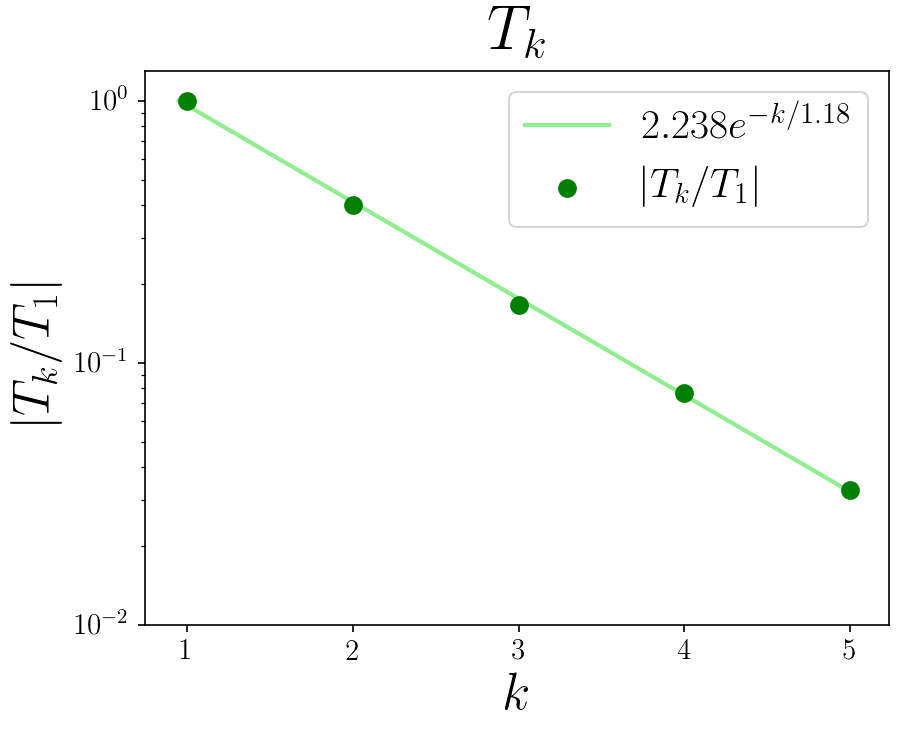}}
\caption{Spin-spin correlation function, $T_k$, in the exponential decaying phase on a lattice of size $N_x=10=N_y$. This data was collected at the AKLT point: $(c_\ell,\ c_2,\ c_4) = (3,\ 1/3,\ 1/15)$.}
\label{fig:T_k_Exp}
\end{figure}

$ $ \\
\paragraph{Phase of Power Law Decaying Correlations}
The loop model enters its phase of power law decaying correlations when $c_2$ and $c_4$ are sufficiently large. This phase is characterized by a spin-spin correlation function that decays sub-exponentially and roughly as a power law, as we display in Fig. \ref{fig:T_k_Pow}. Also, $0 < \eta < 1/4$ in this phase, which indicates that the dominant configurations contain multiple loops of modest size (say, loops that each cover two or three plaquettes).

The phase of power law decaying correlations encompasses physical states, but (roughly) only when $c_\ell = 1$, as we see from Fig. \ref{fig:NewPhasePlot}. Because the correlations in this phase decay slower than exponentially, we expect these states to have boundary Hamiltonians with power law decaying amplitudes and possibly non-local properties. As the proof in Ref. \cite{2017arXiv170907691K} does not apply to such Hamiltonians, we cannot definitively say whether or not the corresponding bulk Hamiltonian is gapped.

\begin{figure}[ht]
\center{\includegraphics[width=.49\textwidth]{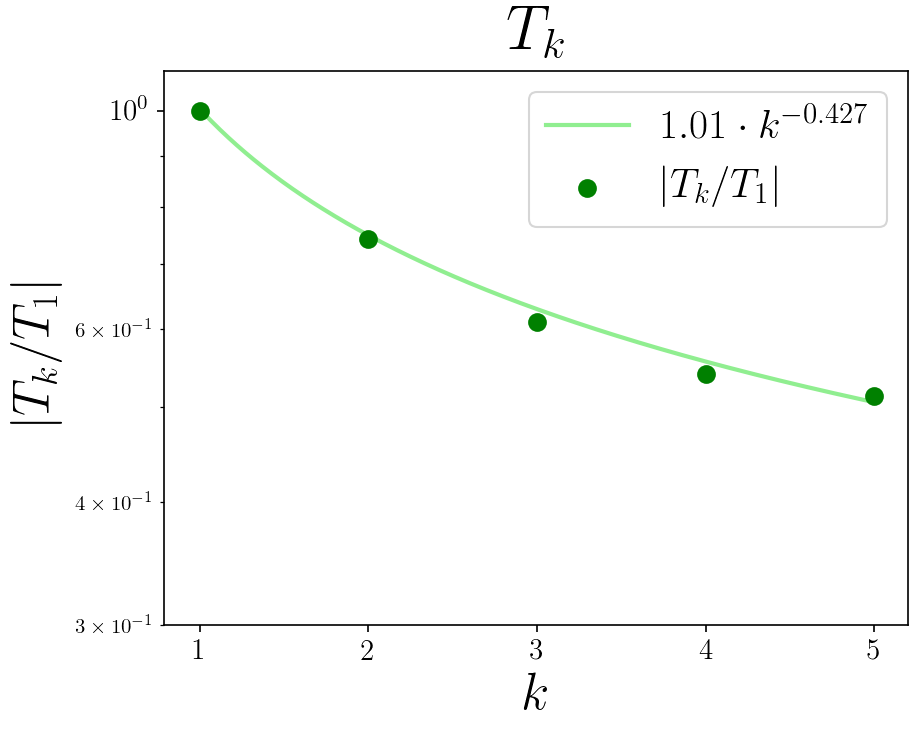}}
\caption{Spin-spin correlation function, $T_k$, in the power law decay phase on a lattice of size $N_x=10=N_y$.  This data was collected at $(c_\ell,\ c_2,\ c_4) = (3,\ 0.55,\ 0.2)$.}
\label{fig:T_k_Pow}
\end{figure}

\subsubsection{Non-integer $c_\ell$}\label{para:growing}
We also study the loop model at non-integer $c_\ell$. Although a non-integer loop weight does not correspond to a physical PEPS projector, it does correspond to a valid parameterization of the loop model. As we show below, we discover an additional phase in this regime.

In Fig. \ref{fig:eta_plot}, we display $\log(\frac{\eta}{1/4})$ for various $c_2$ and $c_\ell$. In this plot, we have fixed $c_4 = 1/15$. In addition to the phases of exponentially and power law decaying correlations, we discover a phase where the correlation functions grow with distance. As one can see in the phase plot, this phase is characterized by $\eta \approx 0$.

In Fig. \ref{fig:eta_plots}, we obtain similar phase plots for different $c_4$. We see that, at larger $c_4$, the phase of power law decay extends to smaller $c_2$ and $c_\ell$.

$ $ \\
\begin{figure}[H]
\center{\includegraphics[width=.49\textwidth]{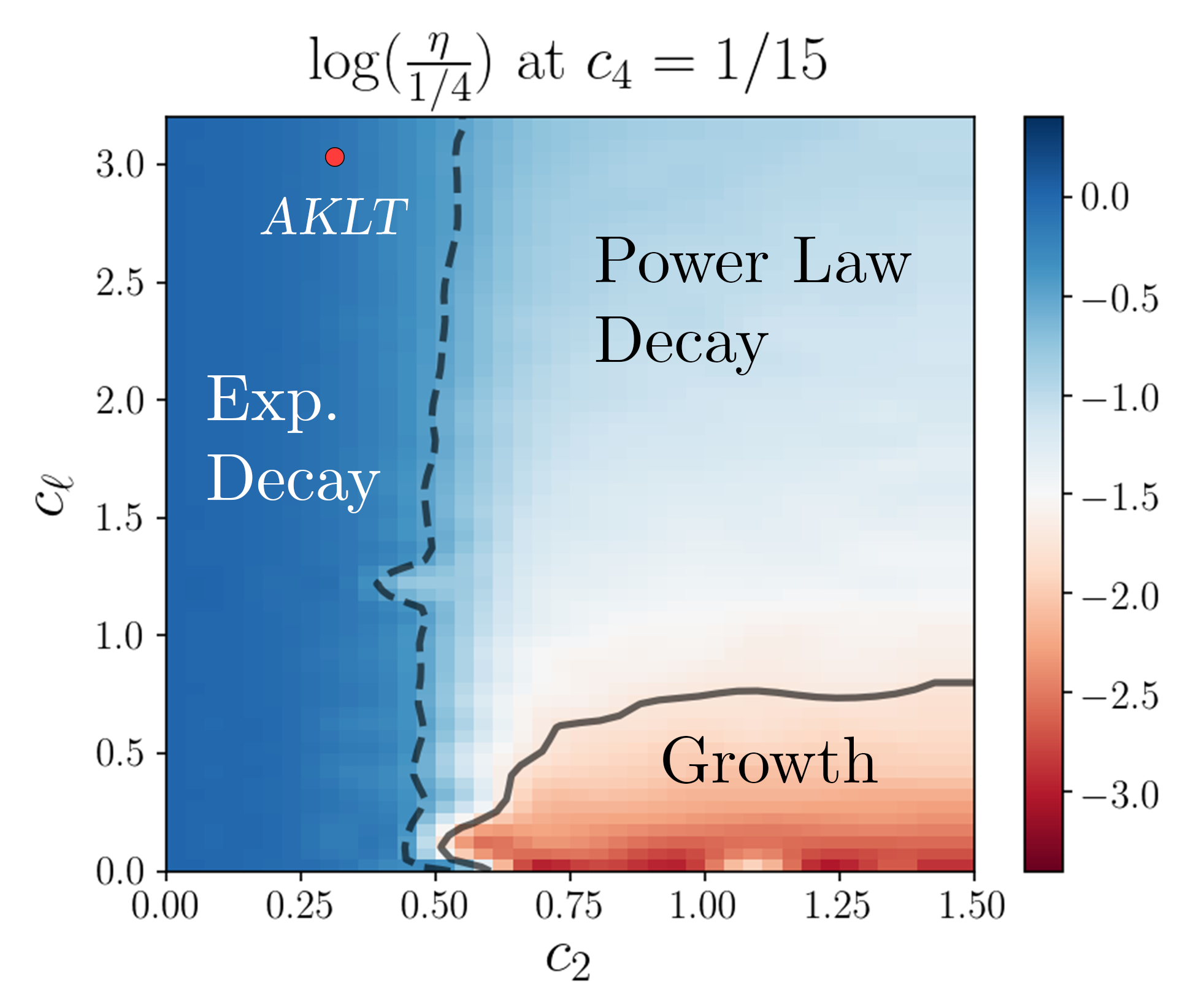}}
\caption{$\log(\frac{\eta}{1/4})$ for various $c_2$ and $c_\ell$ with fixed $c_4 = 1/15$. This plot contains the AKLT model, which we indicate with a red circle. We also label the phases of exponential decay, power law decay, and growth. To help visualize these phases, we have plotted the contour lines $\eta = 0.15$ (dashed gray line) and $\eta = 0.05$ (solid gray line).}
\label{fig:eta_plot}
\end{figure}
$ $ \\
$ $ \\
\begin{figure}[H]
\center{\includegraphics[width=.49\textwidth]{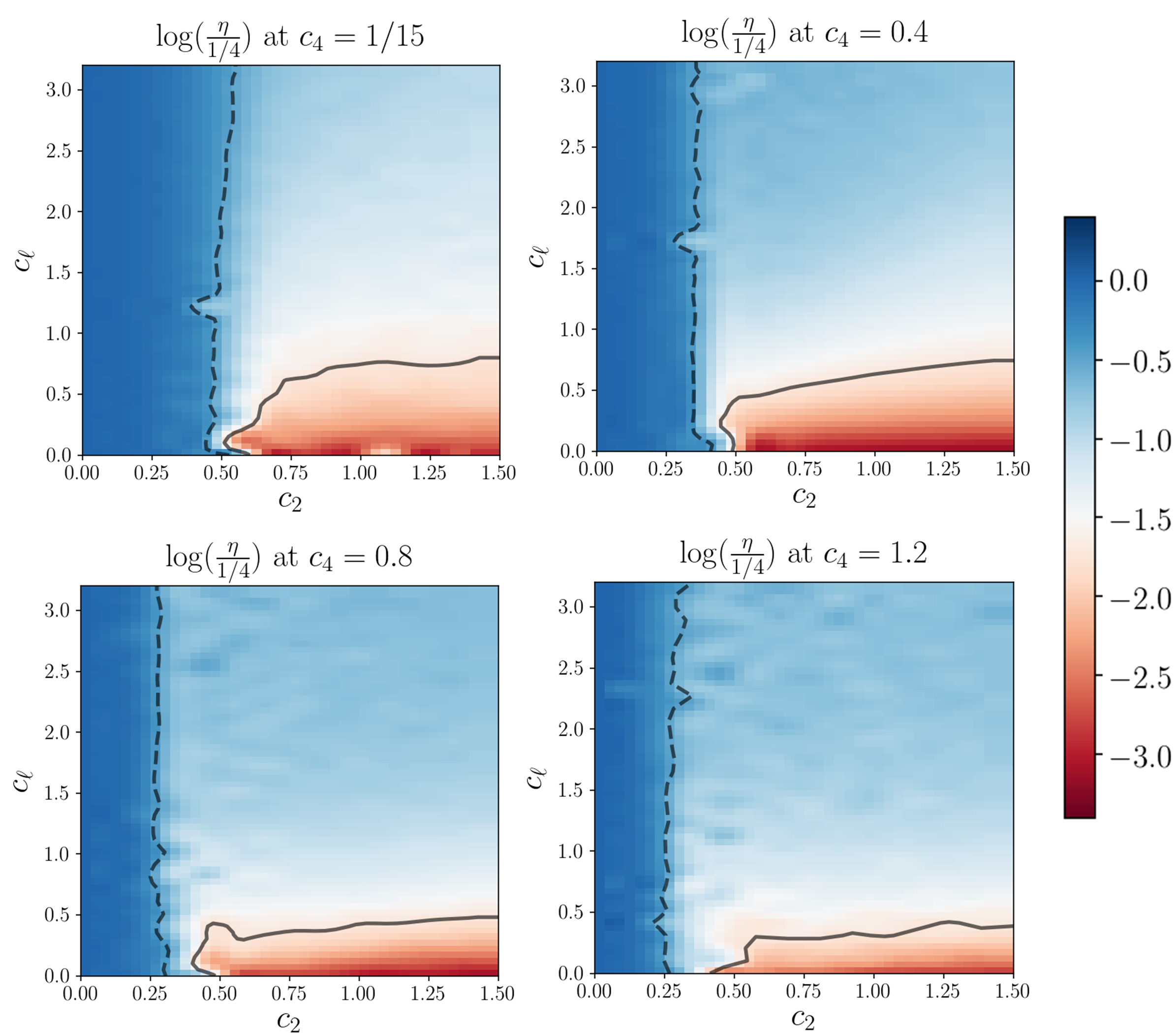}}
\caption{$\log(\frac{\eta}{1/4})$ for various $c_2$ and $c_\ell$ with fixed $c_4 $. Data is shown for $c_4 = 1/15,\ 0.4,\ 0.8,\ 1.2$. We again display the contours $\eta = 0.15$ and $\eta = 0.05$ to help visualize the phases.}
\label{fig:eta_plots}
\end{figure}
$ $ \\

\paragraph{Phase of Growing Correlations}
The phase of growing correlations occurs when the $c_2$ is sufficiently large, and the $c_\ell < 1$. This phase is characterized by a spin-spin correlation function that grows as a function of the distance between qubits, as shown in Fig. \ref{fig:T_k_Growth}. We also observe that $\eta \approx 0$ in this phase, which means that the dominant configurations contain large loops that cover many plaquettes. In addition, no physical states are contained within this phase.

\begin{figure}[!h]
\center{\includegraphics[width=.49\textwidth]{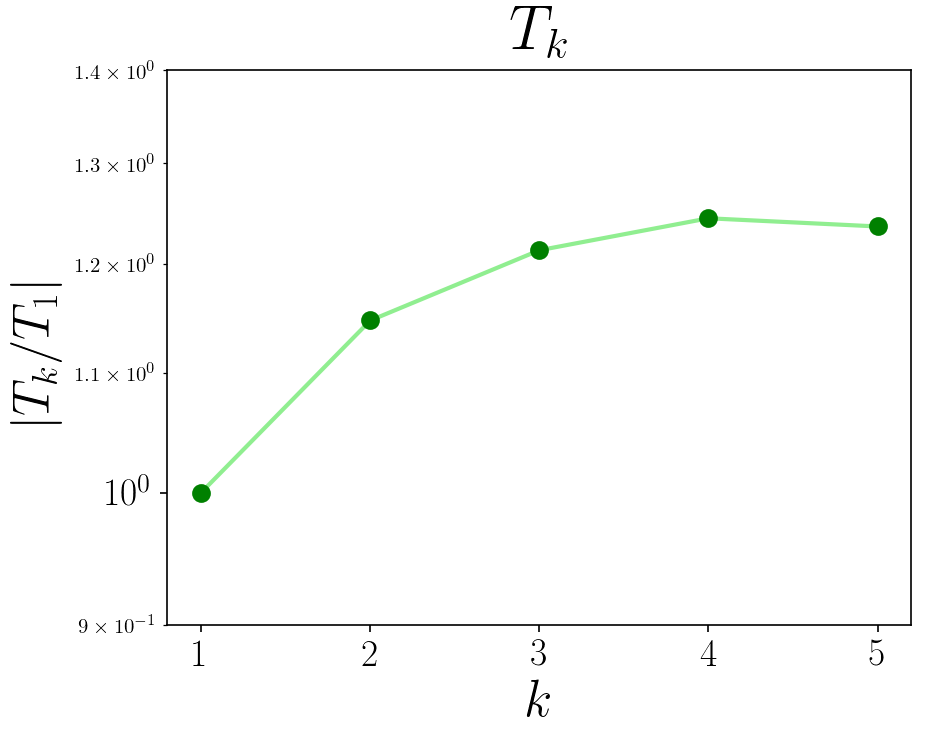}}
\caption{Spin-spin correlation function, $T_k$, in the growth phase on a lattice with $N_x=10=N_y$. This data was collected at $(c_\ell,\ c_2,\ c_4) = (0.2,\ 0.8,\ 0.2)$.}
\label{fig:T_k_Growth}
\end{figure}

\section{Discussion}
In this work, we have used the PEPS representation of the VBS to study its boundary state on a 2D square lattice. By expanding the boundary state in terms of partition functions of a classical loop model, in which loops vertices, and crossings are associated with numerical factors, we were able to use a random sampling procedure to estimate the boundary state. 
Our results indicate that the boundary Hamiltonian of the AKLT model is quasi-local, and therefore the model is gapped in the bulk.

This is moreover confirmed by perturbing the parameters of the classical loop model. We observed a phase diagram composed of a phase with exponentially decaying spin-spin correlations, a phase of power law decaying correlations, and a phase of growing correlations. Each of these phases can be diagnosed by the ratio of the average loop perimeter to the average number of loops. The point corresponding to the 2D square lattice VBS model sits inside the phase of exponential decay, confirming that is expected to be gapped in the bulk. Other PEPS can be obtained at different points of this phase diagram: the ones sitting in the same phase of the AKLT model are expected to be equally gapped, while it is unclear whether or not the phase of power law decay corresponds to gapped or gapless models.

As detailed in Appendix \ref{sec:num}, our current numerical procedure randomly samples lattice configurations to estimate partition functions. However, some refinements to this procedure are possible. It is conceivable that we may obtain better partition function estimates by using a sampling procedure that is not uniformly random, but selectively samples the dominant lattice configurations. For instance, one could use a Monte Carlo sampling procedure to obtain better estimates of the partition functions.

The approach we have taken to study the boundary states should be applicable to other PEPS models as well. If one can resolve the PEPS projection operators into Pauli operators (or some other operators with similar properties), the boundary theory can be mapped to a classical model with the help of tensor network techniques. It is hoped that this methodology can be applied to evidence bulk properties of other PEPS models by investigating their boundaries.

\textit{Acknowledgements}: J. M. thanks John Preskill and Christopher White for thoughtful discussion.
The authors acknowledge support from the Institute for Quantum Information and Matter (IQIM), an NSF Physics Frontiers Center (NSF Grant PHY-1733907). We also thank the Caltech SURF program, whose support made this work possible. 
J.~M. is supported by a Southern California Edison WAVE Fellowship. 
A.~L. is supported from the Walter Burke Institute for Theoretical Physics in the form of the Sherman Fairchild Fellowship.

\bibliography{ProjectPlanRefs}
\bibliographystyle{apsrev4-1}

\appendix
\section{Construction of Projection Operators onto the Symmetric Subspace} \label{sec:projops}
In this appendix, we construct the projection operators $\symproj[d]{n}$ when $d=2$, for the cases $n=2$,  $n=3$, and  $n=4$. These are the relevant cases for the construction of the VBS on a 1D chain, on a 2D hexagonal lattice, and on a 2D square lattice, respectively.

Following Ref. \cite{2013arXiv1308.6595H}, we will denote the symmetric group on $n$ elements by $\mathcal{S}_n$. A representation of $\mathcal{S}_n$ on the vector space $(\mathbb{C}^d)^{\otimes n}$ is defined as follows. For a permutation $\pi \in \mathcal{S}_n$, we define an operator $\pi_d$ which acts on $(\mathbb{C}^d)^{\otimes n}$ and permutes inputs according to 
\begin{equation} \label{eq:pi}
\pi_d = \sum_{i_1, ..., i_n \in [1, d]} | i_{\pi^{-1}(1)}, ..., i_{\pi^{-1}(n)} \rangle \langle i_1, ..., i_n|.
\end{equation}
It follows from this definition that $\pi_d$ is a representation of $\mathcal{S}_n$ on $(\mathbb{C}^d)^{\otimes n}$. 

The symmetric subspace on $(\mathbb{C}^d)^{\otimes n}$, denoted by $\vee^n \mathbb{C}^d$, is defined as the states that are invariant under the action of $\pi_d$:
\begin{equation}
\vee^n \mathbb{C}^d = \Big\{ |\psi\rangle \in \mathbb{C}^d \ \big| \ \pi_d |\psi \rangle = |\psi\rangle \  \forall \pi_d \in \mathcal{S}_n \Big\}
\end{equation}

The projector onto $\vee^n \mathbb{C}^d$, denoted by $\symproj[d]{n}$, is given by
\begin{equation} \label{eq:proj}
\symproj[d]{n} = \frac{1}{n!} \sum_{\pi \in \mathcal{S}_n} \pi_d.
\end{equation}
This quantity is the projection operator that appears in the construction of the valence bond state.

If $d$ is small compared to $n$, the permutations operators $\pi_d$ are not linearly independent. For example, since the anti-symmetric subspace is trivial whenever $n > d$, we have that
\begin{equation} \label{eq:antiproj}
\frac{1}{n!} \sum_{\pi \in \mathcal{S}_n} \text{sgn}(\pi) \pi_d = 0.
\end{equation}
The dimension of the antisymmetric subspace is ${d \choose n}$ if $d \geq n$, and is $0$ otherwise. 

We want to show how to expand $\symproj{n}$ in terms of Pauli matrices. In order to do so, we will use the following elementary lemma.

\newtheorem{lemma}{Lemma}
\begin{lemma}
\begin{equation} \label{eq:12}
(12)_2 = \frac{1}{2}(I+{\sigma}_1 \cdot {\sigma}_2),
\end{equation} where ${\sigma} = (\sigma^x, \sigma^y, \sigma^z)$ is the vector of Pauli matrices. More generally, we have
\begin{equation} \label{eq:ij}
(ij)_2 = \frac{1}{2}(I + {\sigma}_i \cdot {\sigma}_j).
\end{equation} 
\end{lemma}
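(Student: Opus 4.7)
The plan is to verify this identity by diagonalizing both sides simultaneously using the symmetric/antisymmetric decomposition of $(\mathbb{C}^2)^{\otimes 2}$. First I would unpack equation \eqref{eq:pi} in the special case $n=2$, $d=2$, which simply gives the SWAP operator: $(12)_2 \ket{ab} = \ket{ba}$ for $a,b \in \{0,1\}$. In particular, $(12)_2$ acts as $+\mathbb{1}$ on the three-dimensional symmetric (triplet) subspace spanned by $\ket{00},\ket{11},\frac{1}{\sqrt{2}}(\ket{01}+\ket{10})$, and as $-\mathbb{1}$ on the one-dimensional antisymmetric (singlet) subspace spanned by $\ket{s}$.

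Next I would compute the spectrum of $\vec{\sigma}_1 \cdot \vec{\sigma}_2$ on the same decomposition. The standard trick is to write
\begin{equation}
(\vec{\sigma}_1 + \vec{\sigma}_2)^2 = \vec{\sigma}_1^{\,2} + \vec{\sigma}_2^{\,2} + 2\,\vec{\sigma}_1\cdot\vec{\sigma}_2 = 6\,I + 2\,\vec{\sigma}_1\cdot\vec{\sigma}_2,
\end{equation}
and then identify $\vec{\sigma}_1 + \vec{\sigma}_2 = 2\vec{S}_{\text{tot}}$, so that $(\vec{\sigma}_1 + \vec{\sigma}_2)^2 = 4\,S_{\text{tot}}(S_{\text{tot}}+1)$. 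On the triplet subspace $S_{\text{tot}}=1$ gives $\vec{\sigma}_1\cdot\vec{\sigma}_2 = +1$, and on the singlet subspace $S_{\text{tot}}=0$ gives $\vec{\sigma}_1\cdot\vec{\sigma}_2 = -3$. Hence $\tfrac{1}{2}(I + \vec{\sigma}_1\cdot\vec{\sigma}_2)$ equals $+1$ on the symmetric subspace and $-1$ on the antisymmetric one, matching $(12)_2$ eigenvalue-by-eigenvalue. Since two Hermitian operators with the same eigenspaces and same eigenvalues are equal, \eqref{eq:12} follows.

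For the generalization \eqref{eq:ij}, I would note that by \eqref{eq:pi} the operator $(ij)_2$ is the tensor product of the identity on all factors other than $i,j$ with the SWAP acting on factors $i$ and $j$. The same decomposition is true of $\tfrac{1}{2}(I + \vec{\sigma}_i \cdot \vec{\sigma}_j)$ by construction, since by convention $\vec{\sigma}_i \cdot \vec{\sigma}_j$ acts nontrivially only on sites $i,j$. The identity on the full tensor product then reduces to the two-site case already established.

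There is no real obstacle here beyond bookkeeping; the only subtle point is making the identification of $\vec{\sigma}_1 + \vec{\sigma}_2$ with twice the total spin operator (and hence the eigenvalues $S_{\text{tot}}(S_{\text{tot}}+1)$) explicit. An alternative, entirely elementary route would be to write both sides as explicit $4\times 4$ matrices in the computational basis and compare entries, which sidesteps the spin-algebra language but is less illuminating.
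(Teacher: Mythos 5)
Your proof is correct, but it takes a different route from the paper. The paper proves the lemma by brute force in the computational basis: it writes out $(12)_2$ from Eq.~\eqref{eq:pi} as a sum of dyads, expands $\sigma^x\otimes\sigma^x$, $\sigma^y\otimes\sigma^y$, $\sigma^z\otimes\sigma^z$ explicitly, and adds everything up to verify $I + \sigma_1\cdot\sigma_2 = 2\,(12)_2$ entry by entry --- essentially the ``alternative, entirely elementary route'' you mention at the end. You instead diagonalize both sides on the triplet/singlet decomposition, using $(\sigma_1+\sigma_2)^2 = 6I + 2\,\sigma_1\cdot\sigma_2$ together with the total-spin eigenvalues $4S_{\mathrm{tot}}(S_{\mathrm{tot}}+1)$ to get eigenvalues $+1$ and $-3$ for $\sigma_1\cdot\sigma_2$, so that $\tfrac12(I+\sigma_1\cdot\sigma_2)$ matches the swap $(12)_2$ eigenvalue-by-eigenvalue on the two invariant subspaces; since these subspaces span $(\mathbb{C}^2)^{\otimes 2}$ and both operators are Hermitian, equality follows, and the extension to $(ij)_2$ is the same tensor-factor reduction the paper also uses implicitly. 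Your argument is more conceptual --- it explains \emph{why} the identity holds (swap and the Heisenberg exchange share the symmetric/antisymmetric eigenspaces), connects naturally to the spin language used elsewhere in the AKLT context, and would generalize more gracefully; the paper's computation is more self-contained, requiring nothing beyond the definition in Eq.~\eqref{eq:pi} and the matrix form of the Paulis. Both are complete proofs, and your only implicit dependence is the standard identification $\sigma = 2S$ and the angular-momentum addition rule, which you correctly flag as the one point needing to be made explicit.
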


\begin{proof} Eq. (\ref{eq:pi}) dictates that 
\begin{equation}
(12)_2 = |11\rangle \langle 11| + |12\rangle \langle 21 | + |21\rangle \langle 12| + |22\rangle \langle 22|.
\end{equation}
Next, in the basis $\{|1\rangle, |2 \rangle \}$, we have
\begin{equation}
\begin{gathered}
\sigma^x_1 \otimes \sigma^x_2 = \\
\Big(|1\rangle \langle 2| + |2\rangle \langle 1| \Big) \otimes \Big(|1\rangle \langle 2| + |2\rangle \langle 1| \Big) = \\
|11\rangle \langle 22| + |21\rangle \langle 12| + |12\rangle \langle 21| + |22\rangle \langle 11|,
\end{gathered}
\end{equation}
and similarly, 
\begin{equation}
\begin{gathered}
\sigma^y_1 \otimes \sigma^y_2 = \Big(i|2\rangle \langle 1| -i |1\rangle \langle 2| \Big) \otimes \Big(i|2\rangle \langle 1| -i |1\rangle \langle 2| \Big) = \\
-|221\rangle \langle 11| + |21\rangle \langle 12| + |12\rangle \langle 21| - |11\rangle \langle 22|,
\end{gathered}
\end{equation}
and lastly, 
\begin{equation}
\begin{gathered}
\sigma^z_1 \otimes \sigma^z_2 = \Big(|1\rangle \langle 1| - |2\rangle \langle 2| \Big) \otimes \Big(|1\rangle \langle 1| - |2\rangle \langle 2| \Big) = \\
|11\rangle \langle 11| - |21\rangle \langle 21| - |12\rangle \langle 12| + |22\rangle \langle 22|.
\end{gathered}
\end{equation}
Combining these results with $I = |11\rangle\langle 11| + |21\rangle \langle 12| + |12 \rangle \langle 12| + |22\rangle \langle 22|$, we have
\begin{equation}
\begin{gathered}
I + {\sigma}_1 \cdot {\sigma}_2 = \\ 2\Big(|11\rangle \langle 11| + |21\rangle \langle 12| + |12 \rangle \langle 21| + |22 \rangle \langle 22| \Big) = \\ 2 (12)_2.
\end{gathered}
\end{equation}
It follows from this relation that we can write arbitrary two cycles in a $d=2$ representation as 
\begin{equation}
(ij)_2 = \frac{1}{2} (I + {\sigma}_i \cdot {\sigma}_j ).
\end{equation}
\end{proof}

Since every permutation can be written as a product of two cycles, this lemma alone implies that we can expand $\symproj{n}$ as a polynomial in the terms $\{ \sigma_i \cdot \sigma_j\}$. In what follows, we will use the linear dependence of the permutations to further simplify the expression.

\subsection{\texorpdfstring{$n=2$}{n=2}} \label{d2n2}
In the case $n=2$, the symmetric group consists of $\mathcal{S}_2 = \{ \text{id}, (12) \}$.
We can expand $\symproj{2}$ via equation \ref{eq:proj} as:
\begin{equation}
\begin{gathered}
\symproj{2} = \frac{1}{2}\Big( \text{id}_2 + (12)_2 \Big) = \\ \frac{1}{2}\Big( I + \frac{1}{2} (I+{\sigma}_1\cdot {\sigma}_2) \Big) = \frac{3}{4} \Big( I + \frac{1}{3}{\sigma}_1\cdot {\sigma}_2 \Big). \qedhere
\end{gathered}
\end{equation}
This is the projection operator used in the construction of the 1D VBS. 

\subsection{\texorpdfstring{$n=3$}{n=3}} \label{d2n3}
In the case $n=3$, we note that the dimension of the antisymmetric space is $0$, and hence Eq. (\ref{eq:antiproj}) implies that
\begin{equation}
(12)_2 + (23)_2 + (13)_2 = \text{id}_2 + (123)_2 + (132)_2.
\end{equation} 
Using this relationship along with Eq. (\ref{eq:ij}), we obtain 
\begin{equation}
\begin{gathered}
\symproj{3} = \frac{1}{3!} \sum_{\pi \in \mathcal{S}_3} \pi_2 = \\
\frac{1}{6}\Big( \text{id}_2 + (12)_2 + (23)_2 + (13)_2 + (123)_2 + (132)_2 \Big) = \\
\frac{1}{3}\Big((12)_2 + (23)_2 + (13)_2\Big) = \\
\frac{1}{2} \Big(I+\frac{1}{3} ({\sigma}_1\cdot {\sigma}_2 + {\sigma}_2\cdot {\sigma}_3  + {\sigma}_1\cdot {\sigma}_3 ) \Big).
\end{gathered}
\end{equation}
This is the projection operator used in the construction of the VBS on a 2D hexagonal lattice. 

\subsection{\texorpdfstring{$n=4$}{n=4}} \label{d2n4}
In the case $n=4$, we again have that the antisymmetric space is empty, which implies that 
\begin{equation}
\sum_{\pi \in \mathcal{S}_4,\ \text{sgn}(\pi) = +1} \pi_2 = \sum_{\pi \in \mathcal{S}_4,\ \text{sgn}(\pi) = -1} \pi_2.
\end{equation} 
Thus, we can construct $\symproj{4}$ as a sum of only even or odd permutations in $\mathcal{S}_4$. We choose the even permutations, which consist of the identity, the $3$-cycles, and the $(2,2)$-cycles:
\begin{equation}
\begin{gathered}
\symproj{4} = \frac{2}{4!} \sum_{\pi \in \mathcal{S}_4,\ \text{sgn} = +1} \pi_2 = \\
\frac{1}{12} \Bigg( \text{id}_2 + \sum_{(ijk) \in \mathcal{S}_4} (ijk)_2 + \sum_{(ij)(kl) \in \mathcal{S}_4} (ij)_2(kl)_2 \Bigg),
\end{gathered}
\end{equation}
where $\sum_{(ijk) \in \mathcal{S}_4}$ is a sum over all $3$-cycles (there are 8 of them), and $\sum_{(ij)(kl) \in \mathcal{S}_4}$ is a sum over all disjoint $(2,2)$-cycles (there are 3 of them). 

To evaluate the $3$-cycles, note that we can write a $3$-cycle as a product of $2$-cycles: $(ijk) = (jk)(ij)$. Similarly, this cycle's inverse can be written as $(kji) = (ij)(jk)$. Therefore, using Eq. (\ref{eq:ij}), we can write the sum of this cycle and its inverse in the $d=2$ representation as 
\begin{equation}
\begin{gathered}
(ijk)_2 + (kji)_2 = (ij)_2(jk)_2 + (jk)_2(ij)_2 = \\
\frac{1}{2}(I + {\sigma}_i \cdot {\sigma}_j + {\sigma}_j \cdot {\sigma}_k + {\sigma}_i \cdot {\sigma}_k). 
\end{gathered}
\end{equation}
With this relation, we can sum over all $3$-cycles by summing over the 4 distinct cycle-and-inverse combinations, revealing that
\begin{equation}
\sum_{(ijk) \in \mathcal{S}_4} (ijk)_2 = 2I + \sum_{(ij)\in \mathcal{S}_4} {\sigma}_i \cdot {\sigma}_j.
\end{equation} 

Next, the sum over $(2,2)$-cycles are simple because $(ij)_2$ commutes with $(kl)_2$:
\begin{equation}
\begin{gathered}
\sum_{(ij)(kl) \in \mathcal{S}_4} (ij)_2(kl)_2 = \\ \frac{3}{4} I + \frac{1}{4}\sum_{(ij) \in \mathcal{S}_4} ({\sigma}_i \cdot {\sigma}_j) + \frac{1}{4} \sum_{(ij)(kl) \in \mathcal{S}_4} ({\sigma}_i \cdot {\sigma}_j) ({\sigma}_k \cdot {\sigma}_l).
\end{gathered}
\end{equation}

Combining all of these relations, we have our final result:
\begin{align}
\symproj{4} = \frac{15}{48} \Bigg(I &+ \frac{1}{3}\sum_{(ij) \in \mathcal{S}_4} ({\sigma}_i \cdot {\sigma}_j) \notag \\ &+ \frac{1}{15}\sum_{(ij)(kl) \in \mathcal{S}_4} ({\sigma}_i \cdot {\sigma}_j)({\sigma}_k \cdot {\sigma}_l) \Bigg).
\end{align}
This projection operator is used in the construction of the VBS on a 2D square lattice.

\section{Numerical Techniques} \label{sec:num}
\subsection{Setup}
In Sec. \ref{sec:map}, we showed how the boundary state of the VBS can be computed from partition functions of a classical loop model. In this appendix, we describe our procedure for numerically estimating these partition functions and constructing the boundary state.

We begin by outlining our setup. Our goal is to study the boundary state of the VBS on a 2D square lattice. We will study square lattices that are periodic in one direction - i.e. cylinders. In particular, we will consider a half-open cylinder, where strings start and end on only one side of the cylinder, effectively reducing the boundary sites to only one side. Let there by $N_x$ boundary spins in the periodic direction (or equivalently, $N_x$ plaquettes in this direction), and $N_y-1$ spins in the longitudinal direction (or equivalently, $N_y$ plaquettes). This lattice is pictured in Fig. \ref{fig:square_lattice}.

\begin{figure}[htb]
\center{\includegraphics[width=.3\textwidth]{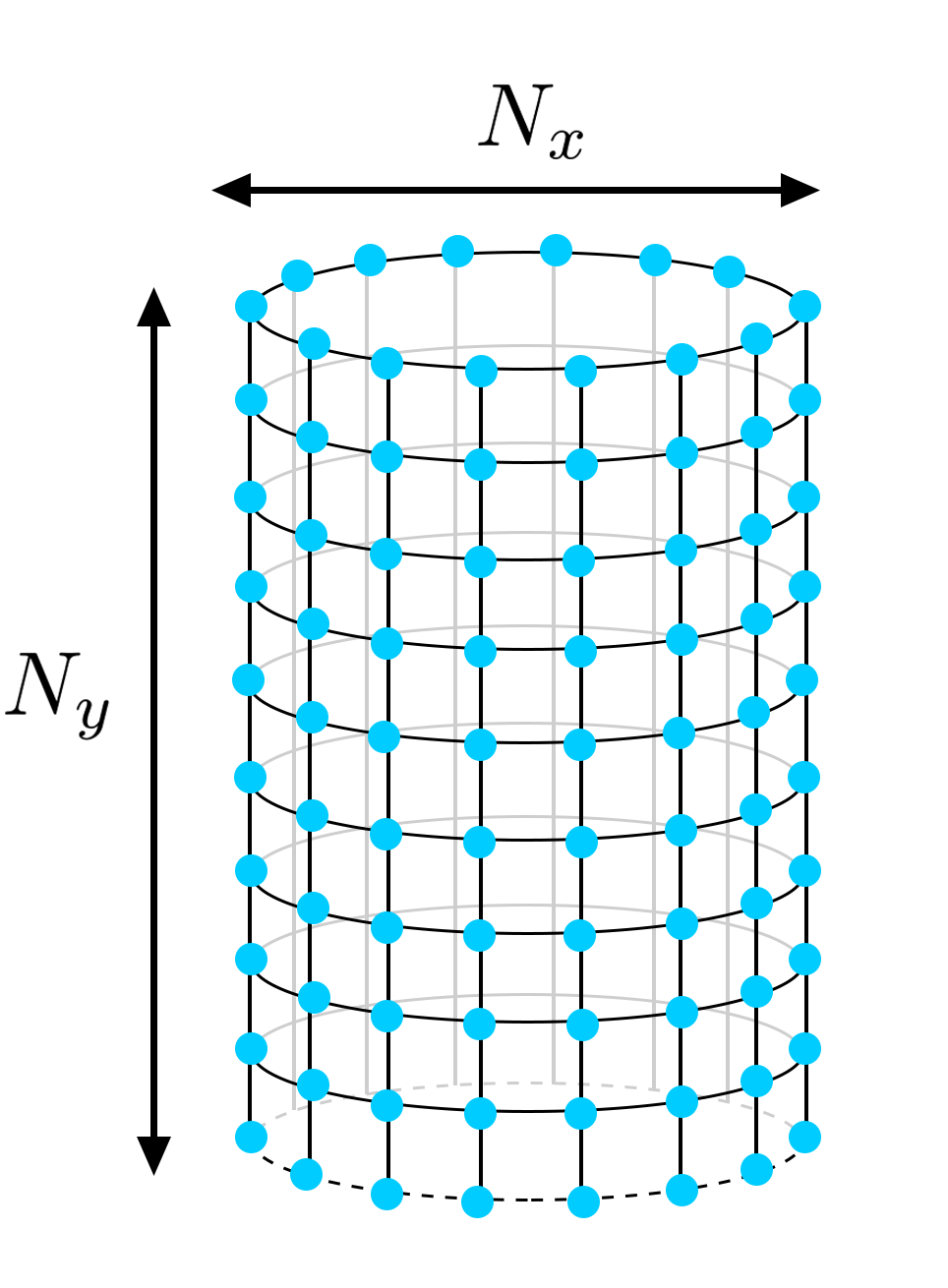}}
\caption{The cylindrical square lattice on which we analyze the VBS and its boundary state. There are $N_x$ plaquettes in the periodic direction, and $N_y$ plaquettes in the longitudinal direction. The dashed lines on the bottom spins signify that this is a half-open cylinder.}
\label{fig:square_lattice}
\end{figure}

It is justified to study the AKLT model on the half-open cylinder rather than a full cylinder because correlations between the two sides of a full cylinder decay exponentially with cylinder length, and hence vanish in the thermodynamic limit. We show this graphically in Figure \ref{fig:T_N_y}. In this figure, we plot as a function of $N_y$ the spin-spin correlation functions of qubits on opposite sides of a full cylinder (one qubit at the top and one at the bottom). We clearly see that the correlation functions decay exponentially with the cylinder length, and therefore the two sides of the full cylinder are uncorrelated in the thermodynamic limit. Thus, the results we obtain with the half-open cylinder in the thermodynamic limit should be identical to those of a full cylinder.

\begin{figure}[htb]
\center{\includegraphics[width=.48\textwidth]{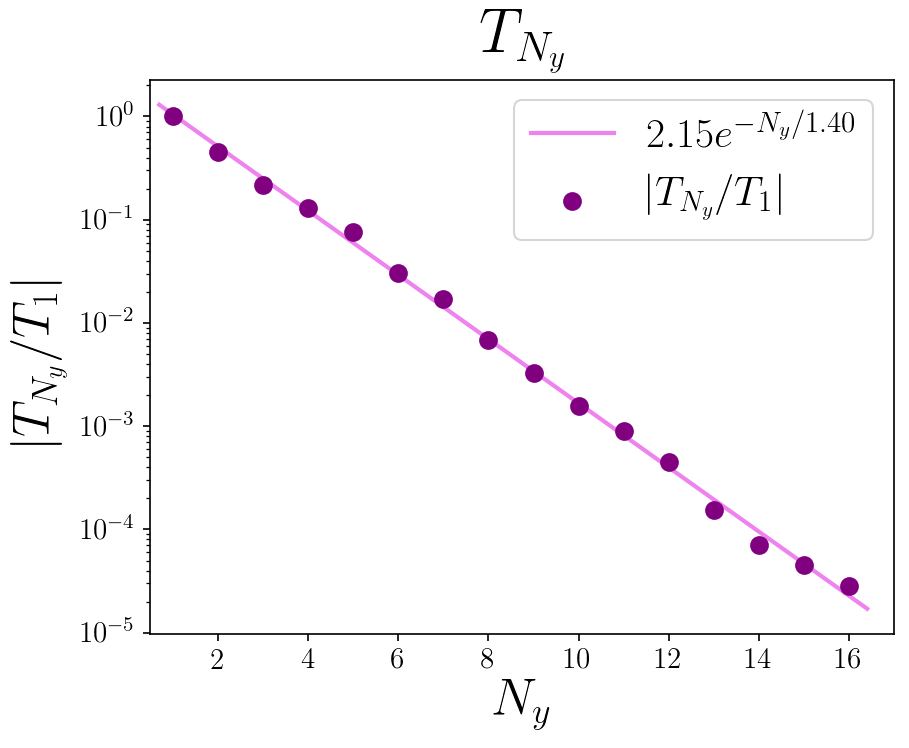}}
\caption{Correlation function of spins on opposite sides of a cylinder with $N_x = 10$. The correlation function decays exponentially with $N_y$, indicating that the two sides of the cylinder are uncorrelated with each other in the thermodynamic limit.}
\label{fig:T_N_y}
\end{figure}

\subsection{Random Sampling}
To compute the boundary state of our lattice, we must calculate the coefficients $\{ Z^{\textbf{i}}\}_\textbf{i}$. As each one of these coefficients is a partition function, we can compute it by summing over the weights of all allowed configurations with boundary conditions specified by $\textbf{i}$. However, a simple counting argument indicates that the number of such configurations grows exponentially in the total number of plaquettes on the lattice: $N = N_x N_y$. Summing over all of these configurations is clearly intractable for large lattices, so we turn to sampling techniques to estimate the coefficients $\{ Z^{\textbf{i}}\}_\textbf{i}$. 

We have written up codes in Python to estimate these coefficients and construct an approximate boundary state. These codes are available on \href{https://github.com/jmmartyn/AKLT_Boundary_Theory}{Github}.

We now outline the procedure we used. We sample different configurations by flipping plaquettes. To flip a plaquette, invert the edges of the spins that border the plaquette: if there is no string, add one; if there is a string, remove it. We depict an elementary plaquette flip in Fig. \ref{fig:plaquette_flip}. As we describe below, we use this plaquette flipping to sample all possible configurations $\Delta \in \mathcal{C}_{\textbf{i}}$.

\begin{figure}[htb]
\center{\includegraphics[width=.48\textwidth]{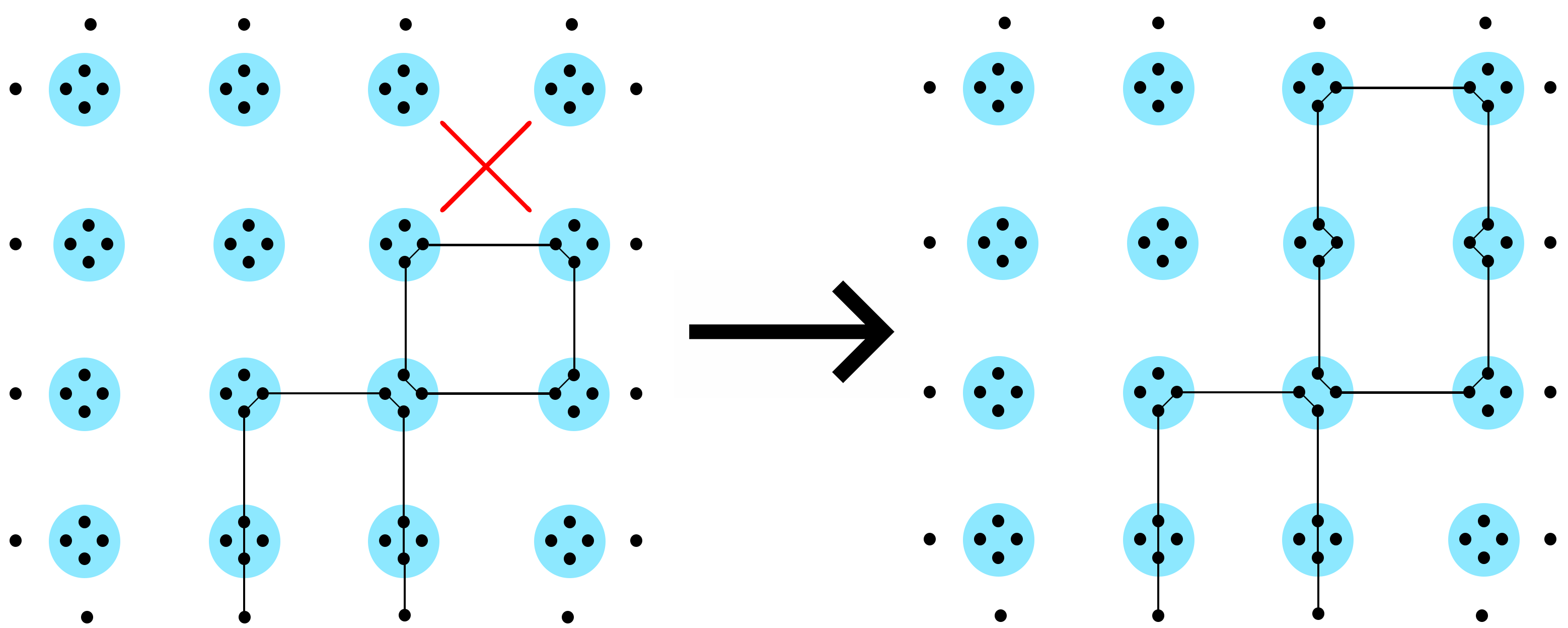}}
\caption{A plaquette flip performed on the plaquette in the upper right hand corner.}
\label{fig:plaquette_flip}
\end{figure}

Next, note that the partition function can be split into a sum of contributions with $n$ plaquettes flipped. Mathematically,
\begin{equation}
Z^{\textbf{i}} = \sum_{n=0}^{N} Z_n^{\textbf{i}},
\end{equation}
where $Z_n^{\textbf{i}}$ is the contribution to the partition function with $n$ plaquettes flipped. 

We can estimate $Z^{\textbf{i}}$ to within an error $\epsilon$ by truncating the above summation below at some $n_{\text{low}}$ and above at some $n_{\text{high}}$:
\begin{equation}
Z^{\textbf{i}} = \sum_{n=0}^{N} Z_n^{\textbf{i}} \approx \sum_{n=n_{\text{low}}}^{n_{\text{high}}} Z_n^{\textbf{i}}
\end{equation}
Essentially, the majority of $Z^{\textbf{i}}$ is contained in $\{ Z_n^{\textbf{i}} \ |\  n \in [n_{\text{low}}, \ n_{\text{high}}] \ \}$. For instance, the greatest contributions to $Z^{\emptyset}$ comes from the configurations with few vertices of degree 2 or 4 because the factors associated with these vertices, $\frac{1}{3}$ and $\frac{1}{15}$, respectively, decrease the magnitude of the weight. Because flipping plaquettes increases the number of vertices with degree 2 or 4, we can estimate $Z^{\emptyset}$ by retaining only the first few $Z_n^{\emptyset}$. In our method, we determine $n_{\text{low}}$ and $n_{\text{high}}$ by (approximately) calculating the values of $Z^{\emptyset}$ until the maximum contribution from terms outside of $n\in [n_{\text{low}}, \ n_{\text{high}}]$ are less than $\epsilon$. In practice, when computing the coefficients for the AKLT model, we found that typically $n_{\text{low}} = 0$ and $n_{\text{high}} \approx N_x N_y /4$ was sufficient for $\epsilon \lesssim 0.01$. When $c_2$ and $c_4$ slightly increase beyond their values at the AKLT model (for instance, to $c_2 = 0.55$, $c_4 = 0.2$), we find that $n_{\text{low}}$ and $n_{\text{high}}$ increase to some $n_{\text{low}} > 0$ and $n_{\text{high}} \sim N_xN_y/2$. When $c_2$ and $c_4$ are increased further to values of order 1, using $n_{\text{low}}$ and $n_{\text{high}}$ is not very effective, as typically all $Z_n^{\textbf{i}}$ are required for an accurate estimation of $Z^\textbf{i}$.

We then approximate each $Z_n^{\textbf{i}}$ with random sampling. First, we begin with the simplest possible configuration that has boundary conditions specified by $\textbf{i}$. For instance, the simplest possible configuration contributing to $Z^{\emptyset}$ is the lattice with no strings and no loops. Likewise, the simplest configuration corresponding to $Z^{\{(i_1, j_1)\}}$ is the configuration with the shortest possible string connecting vertices $i_1$ and $j_1$. Then, for $N_{\text{samples}}$ times at a fixed $n$, we randomly flip $n$ plaquettes and each time arrive at a new configuration. 

However, note that flipping plaquettes cannot generate all possible configurations of vertices with two strings passing through (vertices with degree 4); for instance, flipping plaquettes cannot generate a crossing of two strings. To account for all possible configurations, we begin with a configuration with $n$ plaquettes flipped, and randomly sample over $\leq N_{v_4 \text{ samples}}$ possible configurations of vertices with degree 4 that arise in our lattice. For each configuration, we check that it has the correct boundary conditions and then compute its weight, $w(\Delta)$. We average over the weights of the $\sim N_{v_4 \text{ samples}}$ configurations and multiply by the total number of possible configurations of degree 4 vertices. This produces the expected value of the sum of weights corresponding to our particular choice of plaquette flips. 

We then average each of these (sums of) weights over the $N_{\text{samples}}$ choices of plaquette flips to determine the average weight of all configurations with $n$ plaquettes flipped, which we denote by $\bar{w}_{n}^{\textbf{i}}$. Since there are ${N}\choose{n}$ possible configurations with $n$ plaquettes flipped, the contribution to the partition function with $n$ plaquettes flipped is approximately
\begin{equation}
Z_n^{\textbf{i}} \approx \bar{w}_{n}^{\textbf{i}} \cdot {{N}\choose{n}}.
\end{equation}

There is one more caveat before using these $Z_n^{\textbf{i}}$'s to estimate $Z^{\textbf{i}}$. On a cylindrical lattice, configurations of strings can contain noncontractible loops, which cannot be generated in odd numbers by flipping plaquettes. To include the contributions from these noncontractible loops, we repeat the above procedure, but now, when flipping plaquettes, we also flip all the edges at some rung $r \in [1, N_y+1]$, effectively adding a noncontractible loop to the configuration. On average, combining this with our previous procedure samples over all configurations of noncontractible loops. We then add to $Z_n^{\textbf{i}}$ the corresponding contribution from configurations with an additional noncontractible loop (i.e. $Z_n^{\textbf{i}} \mapsto Z_{n}^{\textbf{i}} + Z_{n,\ \text{noncontractible}}^{\textbf{i}}$).

Finally, we compute $Z^{\textbf{i}} \approx \sum_{n=n_{\text{low}}}^{n_{\text{high}}} Z_n^{\textbf{i}}$ to obtain an approximation to $Z^{\textbf{i}}$. To reduce statistical fluctuations, we iterate this entire procedure $N_{\text{iterations}}$ times and take the average value of $Z^{\textbf{i}}$ as our best estimate of the partition function. From these values, the boundary state was constructed. For small $c_2$, it was observed that partition functions corresponding to many strings were exponentially small and only weakly perturbed the properties of the boundary state; some partition functions could be neglected while maintaining the properties boundary state to within a small error, say $< $ 1\%. Evidently then, we can compute the coefficients $\{Z^{\textbf{i}}\}_\textbf{i}$ and the boundary state to arbitrary accuracies by decreasing $\epsilon$ and increasing $N_{\text{samples}}$, $N_{v_4 \text{ samples}}$, and $N_{\text{iterations}}$. 
$ $ \\

\subsection{Symmetry Considerations}
We can simplify the computation of $\rho_\partial$ by evoking the rotational symmetry of a cylindrical lattice in its periodic direction. For instance, due to this symmetry, $Z^{\{(i, j)\}}$ only depends on the distance between vertices $i$ and $j$ on a a circle of length $N_x$. For instance, $Z^{\{(1, 2)\}} = Z^{\{(2, 3)\}} = Z^{\{(1, N_x)\}}$, and so on. This symmetry persists to higher order partition functions as well.

In our computation of the boundary state, we only estimate one partition function from each conjugacy class of coefficients related by this symmetry. We then set all coefficients in this class equal to this estimated partition function. For example, we would estimate $Z^{\{(1, 2)\}}$, and then set $Z^{\{(i, i+1 (\text{mod} N_x))\}} = Z^{\{(1, 2)\}}$ for all $i \in [1, N_x]$. This significantly reduces the number of necessary computations and ensures that the approximate boundary state exhibits translational symmetry.

\end{document}